\documentclass[reqno]{amsart}

\usepackage{amsmath,amssymb,amsthm,amsfonts}
\usepackage{ulem} 
\usepackage{color}
\usepackage{hyperref}
\usepackage{a4wide}
\usepackage{mathrsfs}
%
\usepackage{tikz}
\usetikzlibrary{fit,trees,quotes,graphs,positioning,calc,arrows.meta,bending,patterns,topaths,math,decorations.markings,intersections,through}
\usetikzlibrary{shapes.geometric} 

\makeatletter
\tikzset{
  use path for main/.code={%
    \tikz@addmode{%
      \expandafter\pgfsyssoftpath@setcurrentpath\csname tikz@intersect@path@name@#1\endcsname
    }%
  },
  use path for actions/.code={%
    \expandafter\def\expandafter\tikz@preactions\expandafter{\tikz@preactions\expandafter\let\expandafter\tikz@actions@path\csname tikz@intersect@path@name@#1\endcsname}%
  },
  use path/.style={%
    use path for main=#1,
    use path for actions=#1,
  }
}
\makeatother

\numberwithin{equation}{section}

\newtheorem{theorem}{Theorem}[section]
\newtheorem{lemma}[theorem]{Lemma}
\newtheorem{prop}[theorem] {Proposition}
\newtheorem{cor}[theorem]  {Corollary}
\newtheorem{definition}[theorem] {Definition}

\theoremstyle{definition}

\theoremstyle{remark}
\newtheorem{remark}[theorem]{Remark}
\newtheorem{example}[theorem]{Example}

\newcommand{\e}{\mathrm{e}} 
\newcommand{\N}{\mathbb{N}}
\newcommand{\R}{\mathbb{R}}

\newcommand{\C}{\mathbb{C}}

\newcommand{\dd}{\mathrm{d}} 

\newcommand{\vect}[1]{\boldsymbol{#1}}



\newcommand{\col}{C} 

\newcommand{\be}{\begin{equation}}
\newcommand{\ee}{\end{equation}}
\newcommand{\ba}{\begin{equation} \begin{aligned}}
\newcommand{\ea}{\end{aligned}\end{equation}}
\newcommand{\bes}{\begin{equation*}}
\newcommand{\ees}{\end{equation*}}


\def\1{{\mathchoice {1\mskip-4mu\mathrm l}      
{1\mskip-4mu\mathrm l}
{1\mskip-4.5mu\mathrm l} {1\mskip-5mu\mathrm l}}}
\newcommand{\heap}[2]{\genfrac{}{}{0pt}{}{#1}{#2}}




\begin{document}

\title{Lagrange inversion  and combinatorial species with uncountable color palette}
\author{Sabine Jansen}
\address{Mathematisches Institut, Ludwig-Maximilians-Universit{\"a}t, 80333 M{\"u}nchen,  Germany}
\email{jansen@math.lmu.de} 
\author{Tobias Kuna}
\address{Department of Mathematics and Statistics,
University of Reading, Reading RG6 6AX, UK}
\email{t.kuna@reading.ac.uk}
\author{ Dimitrios Tsagkarogiannis}
\address{Dipartimento di Ingegneria e Scienze dell'Informazione e Matematica, Universit\`a degli Studi dell'Aquila, 67100 L'Aquila, Italy}
\email{dimitrios.tsagkarogiannis@univaq.it}

\date{\today}
\begin{abstract} 
We prove a multivariate Lagrange-Good formula for functionals of uncountably many variables
and investigate its relation with inversion formulas using trees.
We clarify the cancellations that take place between the two aforementioned formulas and draw connections with similar approaches in a range of applications.
\\

\noindent\emph{Keywords}: colored combinatorial species -- Lagrange-Good inversion -- perturbative expansion in statistical mechanics and field theory.
 AMS math classification: 32A05, 47J07, 05A15, 82B21, 81T15 ]
\end{abstract}

\maketitle

\tableofcontents

\section{Introduction} 

Consider a power series of the form 
\be\label{iteration}
	\rho(z) =z\, \e^{- A(z)},\quad A(z) =\sum_{n=1}^\infty \frac{a_n}{n!}\, z^n.
\ee
In a previous paper \cite{jansen-kuna-tsagkaro2019virialinversion}, we have expressed the inverse map $\zeta(\nu)$ (that is $\rho(z) = \nu$ if and only if $z = \zeta(\nu)$) with the help of a tree generating functional.  The main focus is to formulate these results for \emph{uncountably} many colors. An uncountable color palette is completely natural if one studies thermodynamic functionals for inhomogeneous systems in order to derive variational or PDE formulations in different areas, for example in classical density function theory, liquid crystals, heterogenous materials, colloid systems, system of molecules with various shapes or other internal degrees of freedom.  This is because ``colors'' correspond to positions or other continuous degrees of freedom (e.g. orientations). 

If $\rho$ and $z$ are taking only non-negative values, as in the above mentioned applications, then the form \eqref{iteration} is natural. 

For a single complex variable $z\in \C$, the inverse reads 
\be
	\zeta(\nu) = \nu T^\circ(\nu)
\ee
where $T^\circ(\nu)$ solves 
\be
	T^\circ(\nu) = \exp\Biggl( \sum_{n=1}^\infty \frac{a_n}{n!}\, T^\circ(\nu)^n\Biggr)
\ee
and one recognizes the generating function for weighted rooted trees whose root is a ghost (that is,  the root does not come with powers of $\nu$ in the generating function). 

On the other hand, as is well-known, the coefficients of $\nu$ in the inverse map $\zeta(\nu)$ (and any functions of $z=\zeta(\nu)$) can be expressed in terms of the Lagrange inversion formula. Formally, 
\be \label{eq:single-lagrange}
\begin{aligned}
	 \left[\nu^n\right]  f(\zeta(\nu)) & = \frac{1}{2\pi \mathrm i}\oint \frac{f(\zeta(\nu))}{\nu^n} \frac{\dd \nu}{\nu} \\
	&	=\frac{1}{2\pi \mathrm i}\oint \frac{f(\zeta\bigl(\rho(z)\bigr))}{\rho(z)^n} \rho'(z)\, \frac{\dd z}{\rho(z)} \\
		& = \frac{1}{2\pi \mathrm i}\oint \frac{f(z)}{\rho(z)^n} \rho'(z)\, \frac{\dd z}{\rho(z)} \\
		& = \left[ z^n \right] \frac{f(z)}{[\rho(z)/z]^{n+1}}\, \rho'(z).
\end{aligned} 
\ee
An additional integration by parts yields the more frequently encountered form of the Lagrange inversion formula 
\be\label{withoutderivative}
	\left[\nu^n\right]  f(\zeta(\nu)) =  \frac{1}{n} \left[ z^{n-1} \right] \frac{f(z)}{[\rho(z)/z]^{n}},
\ee
see e.g. \cite[Appendix A6]{flajolet-sedgewick2009book} or the recent survey \cite{gessel2016survey}. The multivariate case is similar, with the determinant of a Jacobi matrix instead of the derivative $\rho'(z)$.  There is a variety of multivariate forms, see \cite{gessel1987}, \cite{good1960} and \cite{percus1964lagrange}, or  \cite{barnabei1985}, \cite{ehrenborg-mendez1994} and \cite{jttu2014} for inifinitely (countably) many variables.

Despite the large literature on Lagrange-Good inversion and the increasing interest it attracts from combinatorialists, to the best of our knowledge  no analogue for uncountably many variables has been considered up to now. An exception is~\cite{mendez-nava1993}, where combinatorial identities are generalized to an arbitrary number of variables, however the concept of summability by Mend{\'e}z and Nava  is a restriction which would not allow to treat the above mentioned applications.
The first aim of the present note is to fill this gap: we propose a multivariate Lagrange-Good formula for functionals of uncountably many variables (Theorem~\ref{thm:lagrange-good}).

The second aim is to clarify the relation with the tree formula from Proposition 2.6 in  \cite{jansen-kuna-tsagkaro2019virialinversion}. Just as Gessel's proof of the Lagrange-Good inversion formula for finitely many variables, our proof starts from a representation of the inverse in terms of trees. In contrast, determinants are associated with digraphs that may have cycles; equality arises because of cancellations as clarified in Proposition~\ref{prop:magicformula}. 
There is actually more to this: when interested in proving asymptotic formulas and checking
the validity (absolute convergence) of the power series, the tree formula is easier to handle.
In fact, in \cite{jttu2014} we had to show that the determinant was bounded and then 
in \cite{jansen-kuna-tsagkaro2019virialinversion} we realized that the determinant is not actually
there due to the cancellations and therefore we deduced better bounds with less effort. The observation that determinants might be a hindrance to asymptotic analysis is also behind determinant-free, so-called \emph{arborescent forms} of Lagrange-Good, see \cite{bender-richmond1998, bousquet-chauve-labelle-leroux2003, goulden-kulkarni1997}.

It is noteworthy that  Good's original motivation \cite{good1960,good1965} for generalizing Lagrange's inversion formula was the enumeration of various kinds of trees, thus starting from the functional equation satisfied by combinatorial generating functions. The antipodal view point is to start from the inversion problem, consisting in solving a given functional equation; then one has two options: either apply one of the versions of the Lagrange-Good inversion formula, or derive an expression for the solution directly with trees. 
Solving inversion problems with different types of trees is common practice in many areas: Butcher series in numerics, Gallavotti trees in RG group, Lindstedt series in KAM theory \cite{gallavotti2012perturbation}, algebra \cite{bass-connell-wright1982} and \cite{wright1989treeformula}, see Section~\ref{sec:discussion} for more details and a discussion of the relation to the approach taken in this paper. 

The article is organized as follows. Section~\ref{sec:prelim} recalls the multi-variate formulas that we seek to generalize and introduces formal power series associated with uncountable spaces. In Section~\ref{sec:main} we set up the problem and present the main result, which is then proven in Section~\ref{sec:proofs}. We draw connections to other inversion formulas based on trees in Section~\ref{sec:discussion}. Appendix~\ref{ApFormal} recalls some formulas for formal power series. For our proof, we propose a formalism of colored species with uncountable color space, which allows us to adapt the proof given by Gessel (1987) \cite{gessel1987} for finitely many variables to the case of uncountably many variables; this formalism is described in detail in Appendix~\ref{app:coloredspecies}. 

\section{Preliminaries} \label{sec:prelim}

\subsection{Multivariate Lagrange inversion} 

Here we recall the form of the Lagrange inversion formula for formal power series in finitely many variables $z_1,\ldots, z_d\in \mathbb C$ that we seek to generalize. For $\vect n= (n_1,\ldots, n_d) \in \N_0^d$ we write $\vect z^{\vect n} = z_1^{n_1}\cdots z_d^{n_d}$ and $[\vect z^{\vect n}] F(\vect z)$ denotes the coefficient of the monomial $\vect z^{\vect n}$ in the series $F$, i.e., if $F(\vect z)= \sum_{\vect n} f_{\vect n} \vect z^{\vect n}$, then $[\vect z^{\vect n}] F(\vect z) = f_{\vect n}$.

Suppose we are given a family $(A_i(z_1,\ldots, z_d))_{i=1,\ldots,d}$ of formal power series whose coefficient of order zero vanishes, $A_i(\vect 0) =0$.  Define additional formal power series\footnote{The choice of letters $\rho_i$ and $z_i$ as well as the exponential form of the map are motivated by applications in statistical mechanics~\cite{jttu2014}, where the $z_i$'s and $\rho_i$'s correspond to activity and density variables and the index $i$ may refer to the type of a particle or a discrete set of locations on a lattice. The exponential from in Eq.~\eqref{rho} will be crucial in the following.} 
 $\rho_1,\ldots, \rho_d$ by 
\be\label{rho}
	\rho_i(z_1,\ldots, z_d):= z_i \exp \bigl( - A_i(z_1,\ldots, z_d)\bigr). 
\ee
There is a uniquely defined family of power series $(\zeta_1,\ldots, \zeta_d)$ such that 
\be
	\zeta_i\bigl( \rho_1(\vect z),\ldots, \rho_d(\vect z)\bigr) = z_i,\quad (i=1,\ldots,d)
\ee
as an equality of formal power series, furthermore $(\zeta_1,\ldots, \zeta_d)$ also satisfies 
\be\label{rho2k}
	\rho_i\bigl( \zeta_1(\vect \nu),\ldots, \zeta_d(\vect \nu) \bigr) = \nu_i \quad (i=1,\ldots,d)
\ee
as an equality of formal power series in the variables $\nu_1,\ldots, \nu_d$. 
Let $\Phi(z_1,\ldots,z_d)$ be yet another formal power series. Then 
\be \label{eq:finite-lagrange1}
	[\vect \nu^{\vect n}]\, \Phi\bigl( \zeta_1(\vect \nu),\ldots, \zeta_d(\vect \nu)\bigr) = [\vect z^{\vect n}]\,  \left\{\Phi(\vect z)\exp\left(\sum_{k=1}^d n_k A_k(\vect z)\right) 
\det\left(\delta_{ij}-z_i\frac{\partial}{\partial z_j}A_i(\vect z)\right)_{1\leq i,j\leq d}\right\}.
\ee
Eq.~\eqref{eq:finite-lagrange1} follows from~\cite[Eq.~(4.5)]{gessel1987}, see also Theorem 8(b) in \cite[Chapter 3.2]{bergeron-labelle-leroux1998book}. 
Variants for countably many variables are available~\cite{barnabei1985,ehrenborg-mendez1994}. 

 Specializing to $\Phi(\vect z) = z_k$ with $k\in \{1,\ldots, d\}$, we obtain 
\be \label{eq:finite-lagrange2}
	[\vect \nu^{\vect n}]\, \zeta_k(\vect \nu) = [\vect z^{\vect n}]\,  \left\{z_k\, \exp\left(\sum_{k=1}^d n_k A_k(\vect z)\right) 
\det\left(\delta_{ij}-z_i\frac{\partial}{\partial z_j}A_i(\vect z)\right)_{1\leq i,j\leq d}\right\},
\ee
compare~\eqref{eq:single-lagrange} for $d=1$. Conversely, Eq.~\eqref{eq:finite-lagrange1} easily follows from Eq.~\eqref{eq:finite-lagrange2} as well.
Eq.~\eqref{eq:finite-lagrange2} allows us to express the coefficients of the unknown reverse series $\zeta_i(\vect \nu)$ in terms of coefficients of the known series $A_i(\vect z)$ and functions thereof. Eq.~\eqref{eq:finite-lagrange2} should be contrasted with  tree formulas derived from the following functional equation (combining \eqref{rho} and \eqref{rho2k} )
\be \label{eq:fifu}
	\zeta_k(\vect \nu) = \nu_k \, \exp \Bigl( A_k\bigl( \zeta_1(\vect \nu),\ldots, \zeta_d(\vect \nu)\bigr)\Bigr),
\ee
see also Section~\ref{sec:discussion}.

Our goal is to generalize Eqs.~\eqref{eq:finite-lagrange1} and~\eqref{eq:finite-lagrange2} to a situation where the finite set $\{1,\ldots, d\}$ is replaced with a possibly uncountable space $\mathbb X$, and to provide adequate combinatorial interpretations. The tree formula for the inverse in uncountable color space was already proven in~\cite{jansen-kuna-tsagkaro2019virialinversion}, it is recalled in Section~\ref{sec:treeinverse}. 

\subsection{Formal power series} 

 Let us briefly motivate the definition of formal power series adopted in~\cite[Appendix~A]{jansen-kuna-tsagkaro2019virialinversion}. A formal power series in finitely many variables $z_1,\ldots, z_d$ may be written as 
\be \label{eq:power1}
	F(z_1,\ldots, z_d) = \sum_{\vect n\in \N_0^d} \frac{\vect z^{\vect n}}{\vect n!}\, a(\vect n) 
\ee
for some suitable family of coefficients $a(\vect n)$, where $\vect n ! =n_1! \ldots n_d!$ but also as 
\be \label{eq:power2}
	F(z_1,\ldots, z_d) = \sum_{n=0}^\infty \frac{1}{n!} \sum_{i_1,\ldots, i_n=1}^d f_n(i_1,\ldots, i_n) z_{i_1}\cdots z_{i_n}
\ee
with coefficients $f_n(i_1,\ldots, i_n)$ that are invariant under permutation of the argument, where $ a(\vect n) = f_n(i_1,\ldots, i_n)$, whenever $ \#\{ i_1, \ldots , i_d \, :\, i_n =j\} = n_j$ for all $j\in \{1 , \ldots ,d\}$ and $n = n_1 + \ldots + n_d$.
Eq.~\eqref{eq:power2} is less elegant than the multiindex formula~\eqref{eq:power1} but redeems itself by a straightforward  generalization to uncountable spaces. (Other benefits of Eq.~\eqref{eq:power2} are discussed by Abdesselam in the context of Feynman diagrams and tensors~\cite[Section III]{abdesselam2003feynman}.) If $\{1,\ldots, d\}$ is replaced with a measurable space $(\mathbb X,\mathcal X)$ it is natural to replace sums with integrals and switch from a vector $(z_1,\ldots, z_d)$ to a measure $z(\dd x)$. 
Accordingly the power series we are interested in are of the form 
\be \label{eq:powmeas}
	F(z) = f_0 + \sum_{n=1}^\infty\frac{1}{n!}\int_{\mathbb X^n} f_n(x_1,\ldots, x_n) z(\dd x_1) \cdots z(\dd x_n),
\ee
the coefficients consist of a scalar $f_0\in \C$ and symmetric measurable functions $f_n: \mathbb X^n\to \C$, $n\in \N$. As usual for formal power series, we do not want to deal with questions of convergence and downgrade~\eqref{eq:powmeas} to a mnemotechnic notation for the sequence $(f_n)_{n\in \N_0}$. We also want to define function- and measure-valued formal power series $F(q;z)$, $K(\dd q;z)$. 

\begin{definition} \label{def:fps} 
	Let $(\mathbb X, \mathcal X)$ be a measurable space. 
	\begin{enumerate} 
		\item [(a)]	A (scalar) \emph{formal power series} on $\mathbb X$ is a family $(f_n)_{n\in \N_0}$ consisting of a scalar $f_0\in \mathbb C$ and symmetric measurable functions $f_n:\mathbb X^n\to \C$.
		\item [(b)] A \emph{function-valued formal power series} is a family $(f_n)_{n\in \N_0}$ of measurable maps $f_n:\mathbb X\times \mathbb X^n\to \C$ such that $(q,x_1,\ldots, x_n)\mapsto f_n(q;x_1,\ldots, x_n)$ is symmetric in the $x_j$ variables. 
		\item [(c)] A \emph{measure-valued formal power series} is a family $(k_n(\dd q;x_1,\ldots, x_n))_{n\in \N_0}$  consisting of a measure $k_0(\dd q)$ on $\mathbb X$ and kernels $k_n: \mathcal X\times \mathbb X^n\to \R_+$ such that for each $B\in \mathcal X$, the map $k_n(B;\cdot)$ is symmetric in the $x_j$ variables. 
	\end{enumerate} 
\end{definition} 

\noindent Definition~\ref{def:fps}(b) and (c) are formulated for functions of a single variable $q\in \mathbb X$ and non-negative measures, they extend in a straightforward way to functions of several variables or complex measures.
Standard operations such as sums, products, etc. are defined in Appendix~\ref{ApFormal}. The operations turn the set of scalar formal power series into an algebra that corresponds to the algebra of symmetric functions from Ruelle~\cite[Chapter 4.4]{ruelle1969book}.

\begin{example}[Exponential] Let $\varphi$ be a non-negative measurable function on $\mathbb{X}$, then the power series associated to the following exponential is 
\[
 \e^{\int_{\mathbb{X}} \varphi(x) z(\mathrm d x)} =1+ \sum_{n=1}^\infty \frac{1}{n!} \int_{\mathbb{X}^n} \varphi(x_1) \ldots \varphi(x_n) z^n (\dd \vect x).
 \]
\end{example}

\begin{example} [Monomials] \label{ex:monomial} 
	Let $m\in \N$ and $q_1,\ldots, q_m\in \mathbb X$. Let $\delta_{p,q}$ be the Kronecker delta, equal to $1$ if $p = q$ and $0$ if $p\neq q$. Then for every measure $z$ on $\mathbb X$ we have 
	\be 
		z(\{q_1\}) \cdots z(\{q_m\}) = \int_{\mathbb X^m} \delta_{q_1,x_1}\cdots \delta_{q_m,x_m} \, z^m(\dd \vect x) = \frac{1}{m!}\int \sum_{\sigma\in \mathfrak {S}_m} \prod_{i=1}^m \delta_{q_i, x_{\sigma (i)}} \, z^m(\dd \vect x),
	\ee
	which is of the form~\eqref{eq:powmeas} with $f_n \equiv 0$ for all $n\neq m$. 
\end{example} 

\begin{example}[The measure $z(\dd q)$] \label{ex:z}
	Let $z(\dd q)$ be a measure on $(\mathbb X, \mathcal X)$. Then for all $B\in \mathcal X$, 
	\be
		z(B) = \int_\mathbb X \1_B(x) z(\dd x) = \int_\mathbb X \delta_x(B) z(\dd x) = \int_\mathbb X k_1(B;x) z(\dd x),
	\ee	
	with kernel $k_1(B;x)=\delta_{x}(B)$, i.e., $k_1(\dd q;x)$ is the Dirac measure at $x$. Accordingly we may view $z(\dd q)$ as a measure-valued formal power series in the sense of Definition~\ref{def:fps}(c), with $k_n\equiv 0$ for all $n\neq 1$ and  $k_1(\dd q; x) = \delta_x(\dd q)$. The measure $z(\dd q)$ replaces the set of monomials $(z_i)_{i=1,\ldots, d}$ that appear naturally for power series of finitely many variables, with $z(B)$ the analogue of $\sum_{i\in B} z_i$.   
\end{example}

\subsection{Variational derivatives and extraction of coefficients} 

Just as for usual power series, coefficients can be extracted by taking derivatives at the origin. In our context, the correct notion of derivative is a variational derivative defined as follows. 

\begin{definition} \label{def:varderivative}
	Let $F(z) = f_0 + \sum_{n=1}^\infty \frac{1}{n!} \int_{\mathbb X^n} f_n(x_1,\ldots, x_n) z^n(\dd \vect x)$ be a formal power series, i.e., $(f_n)_{n\in \N_0}$ is a family of symmetric functions as in Definition~\ref{def:fps}(a). The variational derivative of order $k$ is the function-valued formal power series with coefficients 
	\be
		\Bigl(\frac{\delta^k f}{\delta z^k}\Bigr)_n(q_1,\ldots,q_k;x_1,\ldots,x_n) := 
			f_{k+n}(q_1,\ldots, q_k,x_1,\ldots, x_n).
	\ee
\end{definition} 

\noindent Thus 
\be
	 \frac{\delta^k f}{\delta z^k}(q_1,\ldots,q_k;z) =f_k(q_1,\ldots,q_k) +  \sum_{n=1}^\infty \frac{1}{n!} \int_{\mathbb X^n} f_{k+n}(q_1,\ldots, q_m,x_1,\ldots, x_n) z^n(\dd \vect x). 
\ee
In particular, $f_k(q_1,\ldots, q_k;z)$ is equal to the term of order zero in the formal power series $\frac{\delta^k f}{\delta z^k}(\vect q;z)$. 
Below we often use the  notation
\be
	\Bigl(\frac{\delta^k}{\delta z(q_1)\cdots \delta z(q_k)} f\Bigr)(z) 
	= \frac{\delta^k f}{\delta z^k}(q_1,\ldots, q_k;z).
\ee
Definition~\ref{def:varderivative} is motivated by the following formal computation. For small $t\in \R$ and another measure $\mu$ on $\mathbb X$, we have
\be
\begin{aligned}
	F(z+ t\mu) & = F_0  + \sum_{n=1}^\infty \frac{1}{n!}\sum_{k=0}^n t^k \sum_{\substack{J\subset [n] \\ \#J =k}} \int_{\mathbb X^n} F_n(x_1,\ldots, x_n) z^{n-k}(\dd \vect x_{[n]\setminus J}) \mu^k(\dd \vect x_J) \\
	& = F_0  + \sum_{n=1}^\infty \frac{1}{n!}\sum_{k=0}^n \binom{n}{k} t^k  \int_{\mathbb X^n} F_n(q_1,\ldots,q_{n-k}, y_1,\ldots, y_{k}) z^{n-k}(\dd \vect q) \mu^k(\dd \vect y) \\
	& = F(z) + \sum_{k=1}^\infty \frac{t^k}{k!} \int_{\mathbb X^k} \frac{\delta^k F}{\delta z^k}(q_1,\ldots, q_k;z) \mu^k(\dd \vect x)
\end{aligned}
\ee
and 
\be
	\left. \frac{\dd^k}{\dd t^k} F(z+ t\mu) \right|_{t=0} = \int_{\mathbb X^k} \frac{\delta^k F}{\delta z^k}(q_1,\ldots, q_k;z) \mu^k(\dd \vect x).
\ee

\subsection{Fredholm determinant} \label{sec:rmk:nice}
Let $K:\mathbb X\times \mathbb X\to \R$ be a kernel and $\mathbb K$ the associated integral operator in $L^2(\mathbb X, \mathcal X, z(\dd x))$, given by
\[
	(\mathbb K \varphi)(q') = \int_\mathbb X K(q',q) z(\dd q). 
\] 
For sufficiently regular kernels $K$, the Fredholm determinant 
$\det( \mathrm{Id} - \mathbb K)$ is
\be \label{eq:fredformal1aa}
	\det( \mathrm{Id} - \mathbb K) = 1 + \sum_{n=1}^\infty \frac{(-1)^n}{n!}\int_{\mathbb X^n} \det\left( \left( K(q_i,q_j) \right)_{i,j=1, \ldots ,n} \right) z(\dd q_1) \ldots z(\dd q_n),
\ee
see e.g. Subsection~3.11 in \cite{Simon2015Operator}. The right-hand side of~\eqref{eq:fredformal1aa} is always well-defined as a formal power series in $z$, without any regularity assumptions on the kernel. Accordingly we adopt~\eqref{eq:fredformal1aa} as a definition of the Fredholm determinant on the level of formal power series. 

The definition is easily extended to kernels $K_z$ and associated operators $\mathbb K_z$ that are themselves given by formal power series, as in Eq.~\eqref{eq:aqdev} below. That is, suppose we are given a family of measurable functions $k_0:\mathbb X\times \mathbb X\to \R$ and $k_n : {\mathbb X} \times \mathbb X \times {\mathbb X}^n \rightarrow \mathbb{C}$ such that $(q,q';x_1 ,\ldots x_n) \mapsto k_n(q,q';x_1 ,\ldots x_n)$ is symmetric in the $x_i$ variables, and define
\[
	K_z(q',q) = k_0(q',q) + \sum_{n=1}^\infty \frac{1}{n!}\int_{\mathbb X^n} k_n(q',q;x_1,\ldots,x_n) z^n(\mathrm d \vect x). 
\] 
The determinant $\mathrm{det}\, ((K_z(q_i,q_j))_{i,j=1,\ldots,n})$ is a combination of products of power series, the integral of formal power series is defined using~\eqref{eq:formal-integral}. Therefore, the $n\times n$ determinants and integrals in~\eqref{eq:fredformal1aa} stay well-defined as a formal power series with $K_z$ instead of $K$. For the $n$-th summand on the right-hand side of~\eqref{eq:fredformal1aa}, 
the first non-zero term in the power series expansion has degree $n$, because of the ``integration'' with $z^n(\dd \vect x)$. Hence the contributions for the term of degree $m \in \mathbb{N}$ in the formal power series of $K_z$ comes from summands on the right hand side for $n \leq m$. Therefore, the coefficient of the degree $m$ is a finite sum of finite products of the functions $k_n$ and hence the infinite series on the right hand side is rigorously defined  as a formal power series.

We conclude with a remark that may be helpful for readers which are not too keen on working with Fredholm determinants.
\begin{remark}\label{remfredfini} 
One can replace the Fredholm determinant  with determinants of finite matrices. 
Recall from Example~\ref{ex:monomial} how to interpret $z(\{q\})$ as a formal power series.

	Let $n\in \N$ and $(q_1,\ldots,q_n)\in \mathbb X^n$. Set $Q=\{q_i:\, i=1,\ldots,n\}$. Then the 
	$n$-th coefficient of the Fredholm determinant $\det (\mathrm{Id}- \mathbb K_z)$, evaluated at $(q_1,\ldots,q_n)$, is equal to the $n$-th coefficient at $(q_1,\ldots,q_n)$ of the $(\#Q)\times (\#Q)$-matrix 
\be \label{eq:smalldeterminanta} 
	\mathrm{det}\left(\Bigl( \delta_{q,q'} - z(\{q\}) K_z(q',q)\Bigr)_{q,q'\in Q}\right).
\ee
One can replace $Q$ by any $Q' \supset Q$ without altering the coefficient

If one additionally wants to avoid the use of measures for $z$, because one is either only interested in densities or maybe even in generalized functions, then in order to compute the $n$-th. coefficient it is sufficient to consider the following determinant
\be
\int_{\mathcal{X}}\mathrm{det}\left(\Bigl( \delta(x_j - x_j) - z(x_j)  K_z(x_i,x_j)\Bigr)_{i,j = 1, \ldots , n}\right) m(\dd x_1) \ldots m(\dd x_n),
\ee
where $m$ is a reference measure on $\mathbb{X}$, for example typically  the Lebesgue measure. 

For both cases, more details can be found at the end of Appendix~\ref{ApFormal}.
The use of the Fredholm determinant gives the most natural connection to the combinatorics. All other interpretations will give rise to the same result as the coefficients of the associated formal power series are unchanged.
\end{remark}

\section{Main results} \label{sec:main} 

Let $(\mathbb X, \mathcal X)$ be a measurable space and 
\be
	A(q;z):= A_0(q) + \sum_{n=1}^\infty \frac{1}{n!}\int_{\mathbb X^n} A_n(q;x_1,\ldots, x_n) z(\dd x_1) \cdots z(\dd x_n)
\ee
a function-valued formal power series in the sense of Definition~\ref{def:fps}(b). Thus each $A_n: \mathbb X\times \mathbb X^n\to \C$ is a measurable function that is symmetric in the $x_j$-variables. Define a measure-valued formal power series $\rho(\dd q;z)$ by 
\be
	\rho[z](\dd q) \equiv	\rho(\dd q;z) = z(\dd q) \exp\bigl( - A(q;z)\bigr). 
\ee
The coefficients of the power series on the right-hand side are defined rigorously by Eqs.~\eqref{eq:formal-exponential} and~\eqref{eq:formal-radon}. 
We would like to determine the inverse power series $\zeta[\nu](\dd q)= \zeta(\dd q;\nu)$, that is
\be \label{eq:inverse}
	(\zeta\circ\rho)(\dd q;z) = z(\dd q),\quad (\rho\circ \zeta)(\dd q;\nu) = \nu(\dd q)
\ee
with the composition defined by~\eqref{eq:fmc}. 
In a previous article \cite{jansen-kuna-tsagkaro2019virialinversion} we have proven that the inversion is always possible on the level of formal power series, and  we gave sufficient conditions for  the absolute convergence of the involved power series. Precisely, concerning the formal inverse, we have proven that there is a unique family of formal power series $(T_q^\circ)_{q\in \mathbb X}$ that solves the fixed point equation 
\be \label{tree-eq}\tag{$\mathsf{FP}$}
	T_q^\circ(\nu) = \exp\Biggl( \sum_{n=1}^\infty \frac{1}{n!} \int_{\mathbb X^n} A_n(q;x_1,\ldots, x_n) T_{x_1}^\circ(\nu)\cdots T_{x_n}^\circ(\nu) \nu(\dd x_1)\cdots \nu(\dd x_n)\Biggr),
\ee 
compare \cite[Theorem~3.2.2]{bergeron-labelle-leroux1998book} for finitely many variables. Moreover the measure-valued formal power series 
\be \label{eq:zetadef}
	\zeta[\nu] (\dd q)  := T_q^\circ (\nu) \nu(\dd q)
\ee
(see again~\eqref{eq:formal-radon}) satisfies
Eq.~\eqref{eq:inverse}.

Further we have shown [JKT2019, Proposition 2.6] that the power series $T_q^\circ$ is the generating function for rooted weighted trees whose root has color $q$ and is a ghost (it does not come with powers of $z$ in the generating functional). The tree formula for the inverse is recalled in detail in Section~\ref{sec:treeinverse}. Corollary~\ref{cor:inverse}  below provides an alternative representation as the coefficient of another power series, generalizing the multivariate Lagrange inversion formula~\eqref{eq:finite-lagrange2}. 

First, however, we generalize Eq.~\eqref{eq:finite-lagrange1} to uncountably many colors. 
 The determinant in Eq.~\eqref{eq:finite-lagrange1} is replaced with a Fredholm determinant. 
Define the kernel 
\[
	K_z(q',q) = \frac{\delta}{\delta z(q')} A(q;z).
\] 
 The variational derivative has been introduced in Definition~\ref{def:varderivative}, see also  Eq.~\eqref{fvardev}. In particular, 
\be \label{eq:aqdev}
	K_z(q',q)  = A_1(q;q') + \sum_{m=1}^\infty \frac{1}{m!} \int_{\mathbb X^m} A_{m+1}(q;q',x_1,\ldots, x_m) z^m(\dd \vect x).
\ee
Consider the formal operator 
\be \label{eq:operatordef}
	\bigl(\mathbb A_z \varphi \bigr)(q'):= \int_{\mathbb X} z(\dd q) \varphi(q) \frac{\delta}{\delta z(q')} A(q;z) = \int_\mathbb X K_z(q',q) \varphi(q) z(\dd q)
\ee
which we may view as a (formal) integral operator in $L^2(\mathbb X,\mathcal X,z(\dd q))$ with kernel $K_z(q',q)$. The Fredholm determinant $\mathrm{det}(\mathrm{Id} - \mathbb A_z)$ is defined, as a formal power series, as in Section~\ref{sec:rmk:nice}, see also Eq.~\eqref{eq:fredformal} below.

\begin{theorem} [Lagrange-Good inversion] \label{thm:lagrange-good}
	Let $\Phi(z) = \Phi_0+ \sum_{n=1}^\infty\frac{1}{n!}\int_{\mathbb X^n} \Phi_n(\vect{x}) z^n(\dd \vect{x})$ be a formal power series. Define a formal power series $\Psi$ (using ~\eqref{eq:fmc}) by  
	\be
 \Phi(\zeta(\nu)) = \Psi(\nu) = \Psi_0 +  \sum_{n=1}^\infty\frac{1}{n!}\int_{\mathbb X^n} \Psi_n(\vect{q}) \nu^n(\dd\vect{q}).
	\ee
	Then, for all $n\in \N$ and $(q_1,\ldots,q_n) \in \mathbb X^n$, $ \Psi_n(q_1,\ldots,q_n)$ is equal to the term of order zero in the formal power series 
	\be\label{rhsQ}
		\frac{\delta^n}{\delta z(q_1)\cdots \delta z(q_n)} \left\{ \Phi(z) \exp\left( \sum_{i=1}^n A(q_i;z) \right)  \det \left( \mathrm{Id} 
		- \mathbb A_z \right)  
		\right\}.
	\ee 
\end{theorem}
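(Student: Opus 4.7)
The plan is to adapt I.~Gessel's combinatorial proof \cite{gessel1987} of the multivariate Lagrange--Good formula to uncountable color space, using the colored-species formalism of Appendix~\ref{app:coloredspecies}. The underlying principle is that both sides of the claimed identity are generating functionals for species structures built from rooted, $A$-weighted trees; on the left only trees appear, while on the right the Fredholm determinant contributes cycles that ultimately cancel against tree-edge contributions.

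First, I would expand the left-hand side. Using the tree representation $\zeta[\nu](\dd q)=T_q^\circ(\nu)\,\nu(\dd q)$ from~\cite[Proposition 2.6]{jansen-kuna-tsagkaro2019virialinversion} together with the definition of composition~\eqref{eq:fmc}, the coefficient $\Psi_n(q_1,\ldots,q_n)$ can be written as a sum over structures consisting of a ``$\Phi$-set'' of some size $k$ whose $k$ elements are identified with the roots of $k$ rooted $A$-weighted trees; the $n$ ``marked'' vertices of these trees are labelled $q_1,\ldots,q_n$, while the remaining vertices and $A_m$-arguments are integrated against $\nu$. Call this the \emph{tree expansion} of $\Psi_n$.

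Next, I would expand each factor on the right-hand side. The exponential $\exp(\sum_i A(q_i;z))$ generates a forest of $n$ rooted $A$-weighted trees with roots $q_1,\ldots,q_n$; the factor $\Phi(z)$ generates an additional $\Phi$-set; and the Fredholm determinant $\det(\mathrm{Id}-\mathbb A_z)$ expands via~\eqref{eq:fredformal1aa} into a signed sum over ``cycle covers'' of some $\ell$ vertices, each cycle built out of kernel edges $K_z$, which by~\eqref{eq:aqdev} are themselves generating functionals for trees with a distinguished leaf. The operator $\delta^n/(\delta z(q_1)\cdots\delta z(q_n))$ followed by extraction of the order-zero term forces every $z$-integration variable to equal one of $q_1,\ldots,q_n$ (with multiplicity). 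Thus the right-hand side becomes a signed sum over ``decorated digraphs'' in which each vertex lies either in a tree rooted at some $q_i$, in the $\Phi$-set, or on a cycle of $K_z$-trees.

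The crux is to show that all configurations containing at least one cycle cancel in sign-reversing pairs, which will be stated as Proposition~\ref{prop:magicformula}. Following Gessel, I would construct an involution on the combinatorial skeleton: using the canonical ordering on the abstract labels of the species-structure, locate the smallest label lying on a cycle, and toggle the incident cycle-edge with an adjacent tree-edge; this flips $\mathrm{sgn}(\sigma)$ while preserving the $A_m$-weighted integrand. After cancellation, the surviving cycle-free configurations are in sign-preserving bijection with the tree expansion of the first step, which finishes the proof. The hardest step, and the reason the colored-species formalism of Appendix~\ref{app:coloredspecies} is indispensable, is this last one: in uncountable $\mathbb X$ there is no global ordering of colors, so the involution must live at the level of the abstract label set (kept separate from the $\mathbb X$-decoration) and must be measurable after integration against the color weights. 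Once this decoupling is in place, the involution and the matching with the tree expansion proceed essentially as in the finite-dimensional case.
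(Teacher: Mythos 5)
Your proposal is correct and follows essentially the same route as the paper: expand the left-hand side via the tree representation of $\zeta$, expand the right-hand side via enriched maps (forests from the exponential, signed crowns from the Fredholm determinant), cancel the cyclic configurations (this is exactly Proposition~\ref{prop:magicformula}, proved via Lemmas~\ref{lem:ek} and~\ref{lem:dk}), and identify the survivors with the tree expansion of $\Psi_n$. The only cosmetic difference is how the cancellation is realized: instead of a smallest-label edge-toggling involution, the paper notes that each cycle-rooted component admits a \emph{unique} splitting into a crown plus attached vertex-rooted trees, so each such component carries exactly two opposite-sign attributions (exponential versus determinant) and the signed sum is $(1-1)^{\#\{\text{cyclic components}\}}=0$ --- a weight-preserving bookkeeping toggle that avoids any issue of edge redirection altering the (non-symmetric) $A$-weights.
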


\noindent By the definitions adopted in Section~\ref{sec:rmk:nice}, the Fredholm determinant in~\eqref{rhsQ} is given by 
\be \label{eq:fredformal}
 \det \left( \mathrm{Id}- \mathbb A_z \right) 
		  = 1 + \sum_{n=1}^\infty \frac{(-1)^n}{n!}\int_{\mathbb X^n}  \det\left( \left( \frac{\delta}{\delta z(x_i)} A(x_j;z) \right)_{i,j=1, \ldots ,n} \right) z(\dd x_1) \ldots z(\dd x_n), 
\ee
we will sometimes use the heuristic notation
\[
	\det(\mathrm{Id}- \mathbb A_z) = \det \left( \mathrm{Id} 
		- z(\dd q)\frac{\delta}{\delta z(q')} A(q;z) \right).
\]

\begin{remark}\label{rmk:nice}
	Following Remark~~\ref{remfredfini}, the Fredholm determinant in Theorem~\ref{thm:lagrange-good} can be replaced by usual determinants in two ways:
	\begin{enumerate}
\item Using restrictions, we get that	
    \be\label{rhs}
       \det \left( \left( \delta_{qq'} - z(\{q\})\frac{\delta}{\delta z(q')} A(q;z)\right)_{q,q'\in Q}\right) ,
	\ee
	where
	\be\label{Q}
		Q:=\{q_i \mid i =1,\ldots, n\}.
	\ee
The replacement is also possible as well if we use instead of $Q$ any bigger set $Q'\supset Q$. 
\item If $z$ has a density with respect to a reference measure $m$, which we denote also by $z$, or if $z$ is a generalized function one can replace the Fredholm determinant by
\be
\int_{\mathcal{X}}\mathrm{det}\left(\Bigl( \delta(x_j - x_j) - z(x_i) \frac{\delta}{\delta z(x_j)} A(x_i;z) \Bigr)_{i,j = 1, \ldots , n}\right) m(\dd x_1) \ldots m(\dd x_n),
\ee
where the expression is well-defined for nice enough $A$; indeed see \eqref{eq:fredformal}.
\end{enumerate}

The usual determinant may feel more elementary, however the Fredholm determinant is defined as a series of usual determinants anyhow. If one wishes to apply the theorem in a functional-analytic context  the Fredholm determinant may make sense as an actual determinant of an operator, while $z(\{q\})$ in the finite matrix, could always be zero in the function space considered, for example in $L^p$-spaces.
\end{remark}

\begin{remark}
	We stress that the determinant in Eq.~\eqref{rhs} is the determinant of a matrix indexed by a set of colors $Q$ and not by indices $i,j\in \{1,\ldots, n\}$. 
	Crucially, if $q_i = q_j$ for some $i \neq j$, the color $q= q_i = q_j$ gives rise to only \emph{one} row and column in the matrix. As a consequence, colors may repeat among the $q	_i$'s but the determinant still be non-zero. 
	
	This is consistent with the determinant in Eq.~\eqref{eq:finite-lagrange1} for finitely many variables. In analytic proofs \cite{good1960}, the determinant comes in via a complex change of variables in a contour integral. In particular, every variable $z_k$ appears only once in the determinant, even when we are interested in coefficients of monomials with powers $n_k\geq 2$. See also Eq.~\eqref{eq:smalldet0}.
\end{remark}

When $\Phi(z) =1$, we have $\Psi_n\equiv 0$ for all $n\geq 1$ and we obtain the following striking result. 

\begin{prop} \label{prop:magicformula}
For all $n\geq 1$ and $(q_1,\ldots,q_n)\in \mathbb X^n$, the term of order zero  in the formal power series 
\be  \label{eq:magic} 
	\frac{\delta^n}{\delta z(q_1)\cdots \delta z(q_n)} 
	\left\{\exp\left( \sum_{i=1}^n A(q_i;z) \right)
	 \det \left( \mathrm{Id} - \mathbb A_z \right)  
	\right\}
\ee
vanishes.
\end{prop}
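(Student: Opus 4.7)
The plan is to observe that Proposition~\ref{prop:magicformula} is an essentially immediate corollary of Theorem~\ref{thm:lagrange-good}, applied in the degenerate case where the test power series $\Phi$ is constant. This is in fact the ``striking'' content the authors allude to in the paragraph preceding the statement: the functional equation for $\zeta$ forces a highly non-trivial cancellation between the exponential prefactor and the Fredholm determinant.

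Concretely, I would take $\Phi(z)=1$, meaning $\Phi_0=1$ and $\Phi_n\equiv 0$ for all $n\geq 1$. The composition rule~\eqref{eq:fmc} that defines $\Psi(\nu)=\Phi(\zeta(\nu))$ then gives $\Psi(\nu)=1$, i.e.\ $\Psi_0=1$ and $\Psi_n(q_1,\ldots,q_n)=0$ for every $n\geq 1$ and every choice of colors $(q_1,\ldots,q_n)\in\mathbb{X}^n$. Theorem~\ref{thm:lagrange-good} identifies these coefficients $\Psi_n(q_1,\ldots,q_n)$ with the order-zero term of the variational derivative
\[
  \frac{\delta^n}{\delta z(q_1)\cdots \delta z(q_n)}\left\{ \Phi(z)\exp\left(\sum_{i=1}^n A(q_i;z)\right) \det(\mathrm{Id}-\mathbb{A}_z)\right\},
\]
and with $\Phi(z)\equiv 1$ this is exactly the expression~\eqref{eq:magic}. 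Since the left-hand side vanishes, so does the order-zero term on the right, which is the statement of the proposition.

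The only thing worth double-checking is that the composition $\Phi\circ\zeta$ defined via~\eqref{eq:fmc} genuinely sends the constant power series $\Phi\equiv 1$ to the constant power series $\Psi\equiv 1$; this is a straightforward consequence of the composition formula, since a constant series has no terms of positive order to substitute into. Beyond this bookkeeping, no additional work is required: the real content of the magic cancellation has been packaged into Theorem~\ref{thm:lagrange-good}. If one were to prove Proposition~\ref{prop:magicformula} directly, the main obstacle would be reproducing the combinatorial cancellation between cyclic digraph contributions (coming from the determinant) and tree contributions (coming from the iterated variational derivative of $\exp\sum_i A(q_i;z)$), which is precisely what the proof of Theorem~\ref{thm:lagrange-good} handles once and for all.
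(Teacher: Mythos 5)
Your deduction is formally valid as a piece of logic---taking $\Phi\equiv 1$ in Theorem~\ref{thm:lagrange-good} does force $\Psi\equiv 1$ (the relevant composition rule is \eqref{fcomp2}, and a constant series indeed composes to itself), hence $\Psi_n=0$ for $n\geq 1$---but in the context of this paper the argument is circular. The paper's proof of Theorem~\ref{thm:lagrange-good} explicitly invokes Proposition~\ref{prop:magicformula}, in the form \eqref{magickey}, to annihilate the terms pairing the determinant coefficients $D_{\#J}$ with the exponential coefficients $E_{\#I_1}^{\#I_1+\#J}$. The logical order is therefore Proposition~\ref{prop:magicformula} first and Theorem~\ref{thm:lagrange-good} second, even though the exposition in Section~\ref{sec:main} presents the proposition as a ``special case'' of the theorem. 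Your closing remark that the combinatorial cancellation ``has been packaged into Theorem~\ref{thm:lagrange-good} once and for all'' has the dependency exactly backwards: the cancellation \emph{is} the proposition, and the theorem is built on top of it.

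What is missing is precisely the direct argument you defer to. The paper expands the coefficients of $\exp(\sum_i A(q_i;z))$ at the diagonal point $\vect x_I=\vect q_I$ as a sum over enriched partial endofunctions whose connected components are vertex-rooted and cycle-rooted trees (Lemma~\ref{lem:ek}), expands the Fredholm determinant coefficients as a signed sum over assemblies of crowns carrying the sign $(-1)^s$ in the number $s$ of crowns (Lemma~\ref{lem:dk}, which rests on rewriting $\mathrm{sgn}(\sigma)$ via the cycle decomposition), and then observes that in the product each cycle-rooted tree of the assembled structure $\bar F\in\mathcal M[n]$ arises in exactly two ways---either whole from the exponential, or split into a crown from the determinant with vertex-rooted trees attached---and that these two contributions carry opposite signs and cancel. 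To make your proposal non-circular you would have to supply an independent proof of Theorem~\ref{thm:lagrange-good} that bypasses Proposition~\ref{prop:magicformula}; the paper provides no such proof, and producing one in the uncountable-color setting is essentially the whole difficulty.
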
 

\noindent Note that the expression in the curly bracket also depends on $n$. Remark~\ref{rmk:nice} applies here as well.
These cancellations give rise to the inversion formula expressed in terms of trees as it will be reminded in Section~\ref{sec:treeinverse}.
Similar cancellations happen in \cite{abdesselam2003jacobian} 
where the formal inverse is expressed as a one-point correlation function $\rho^{(1)}(x)$ of an appropriately chosen
complex Bosonic gas.
In the latter case, cycles are cancelled by the partition function $Z$ leaving one tree with root color $x$.

Another special case of Theorem~\ref{thm:lagrange-good} corresponds to the choice $\Phi(z) = z(B)$ with $B\subset \mathbb X$ measurable (the analogue for finitely many variables is $\sum_{i\in B} z_i$).

\begin{cor} [Determinant formula for the inverse] \label{cor:inverse}
	For every $B\in \mathbb X$, the $n$-th coefficient of $\zeta(B;\nu)$, evaluated at $(q_1,\ldots, q_n) \in \mathbb X^n$, is equal to the $n$-th coefficient of 
	\be \label{eq:zbi}
		z(B) \exp\left( \sum_{i=1}^n A(q_i;z) \right)
		\det \left( \mathrm{Id} 
		- \mathbb A_z \right)  
	\ee
	evaluated at $(q_1,\ldots, q_n)$.
\end{cor}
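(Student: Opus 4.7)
The plan is to specialize Theorem~\ref{thm:lagrange-good} to the choice $\Phi(z) := z(B)$ and simply read off~\eqref{eq:zbi}.

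First I would verify that $\Phi(z) = z(B)$ is admissible as a scalar formal power series in the sense of Definition~\ref{def:fps}(a). Writing $z(B) = \int_{\mathbb X}\1_B(x)\,z(\dd x)$ as in Example~\ref{ex:z}, this is a series of the form~\eqref{eq:powmeas} with $\Phi_0 = 0$, $\Phi_1(x) = \1_B(x)$, and $\Phi_m \equiv 0$ for $m\geq 2$. Next I would compute the composition $\Phi\circ\zeta$ via the formal composition rule~\eqref{eq:fmc}: because $\Phi$ has only a first-order term and integrates its argument against $\1_B$, the composition collapses to
\[
	\Phi(\zeta(\nu)) = \int_{\mathbb X}\1_B(x)\,\zeta(\dd x;\nu) = \zeta(B;\nu)
\]
as an identity of scalar formal power series in $\nu$. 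Thus in the notation of Theorem~\ref{thm:lagrange-good} one has $\Psi(\nu) = \zeta(B;\nu)$, and consequently $\Psi_n(q_1,\ldots,q_n)$ is by definition the $n$-th coefficient of the scalar series $\zeta(B;\nu)$ evaluated at $(q_1,\ldots,q_n)$.

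The proof would then conclude by invoking Theorem~\ref{thm:lagrange-good}: this $\Psi_n(q_1,\ldots,q_n)$ equals the term of order zero in
\[
	\frac{\delta^n}{\delta z(q_1)\cdots\delta z(q_n)}\left\{z(B)\exp\Bigl(\sum_{i=1}^n A(q_i;z)\Bigr)\det(\mathrm{Id}-\mathbb A_z)\right\},
\]
and, by the identification between the term of order zero of an $n$-fold variational derivative and the $n$-th coefficient of the series it is taken from (stated right after Definition~\ref{def:varderivative}), this is precisely the $n$-th coefficient of~\eqref{eq:zbi} at $(q_1,\ldots,q_n)$.

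Since the argument is a bare specialization of Theorem~\ref{thm:lagrange-good}, there is no genuine obstacle. The one point deserving attention is to keep straight the two layers of coefficients: the measure-valued series $\zeta(\dd q;\nu)$ has coefficients that are kernels in $q$, whereas, after pairing with $\1_B$, $\zeta(B;\nu)$ is a scalar formal power series in $\nu$ to which the coefficient extraction of Theorem~\ref{thm:lagrange-good} is directly applicable.
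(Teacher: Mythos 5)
Your proposal is correct and matches the paper's (implicit) argument exactly: the paper obtains Corollary~\ref{cor:inverse} precisely by specializing Theorem~\ref{thm:lagrange-good} to $\Phi(z)=z(B)$, and your verification that $\Phi(\zeta(\nu))=\zeta(B;\nu)$ together with the identification of the order-zero term of the $n$-fold variational derivative with the $n$-th coefficient supplies the details the paper leaves to the reader. The only cosmetic point is that the composition of the scalar series $\Phi$ with the measure-valued series $\zeta$ is governed by Eq.~\eqref{fcomp2} rather than Eq.~\eqref{eq:fmc}, but this does not affect the computation.
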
 

\noindent The proposition provides an analogue for~\eqref{eq:finite-lagrange2}.

\section{Lagrange-Good inversion. Proof of Theorem~\ref{thm:lagrange-good}} \label{sec:proofs}

\subsection{Enriched maps, cycle-rooted trees, weights}  
A key ingredient of the bijective proofs by Labelle and Gessel~\cite{labelle1981, gessel1987} is a graphical representation of mappings. 
Let us briefly recall some standard vocabulary, following \cite[Chapters~3.1 and~3.2]{bergeron-labelle-leroux1998book}. Let $V,S$ be disjoint finite set. 

A \emph{partial endofunction} on $U:=V\cup S$ with domain $V$ is a mapping $f:V \to U=V \cup S$.
Any partial endofunction is associated with a directed graph $G=(U,E)$, or \emph{digraph} for short, with vertex set $U$ and directed edges $E = \{ (v,f(v))\mid v\in U \}\subset U\times U$. The digraph $G$ may have self-edges $(v,v)$, called \emph{loops}. Loops in $G$ correspond to fixed points of the mapping, $f(v) =v$.
The graphs $G$ obtained in this way have two types of vertices: Vertices $v\in V$ have exactly one outgoing edge because $f$ maps $v$ to exactly one vertex $f(v)$. Vertices $v\in U\setminus V=S$ are \emph{sinks}, i.e., they have no outgoing edge, because they do not belong to the domain of $f$.

In both cases the out-degree in $G$ of every vertex is either $0$ or $1$. We call digraphs with this property \emph{functional digraphs}. Clearly there is a one-to-one correspondence between functional digraphs $G$ and partial endofunctions $f$ and we often identify them, cf.\ Figure~\ref{fig:endograph}.

 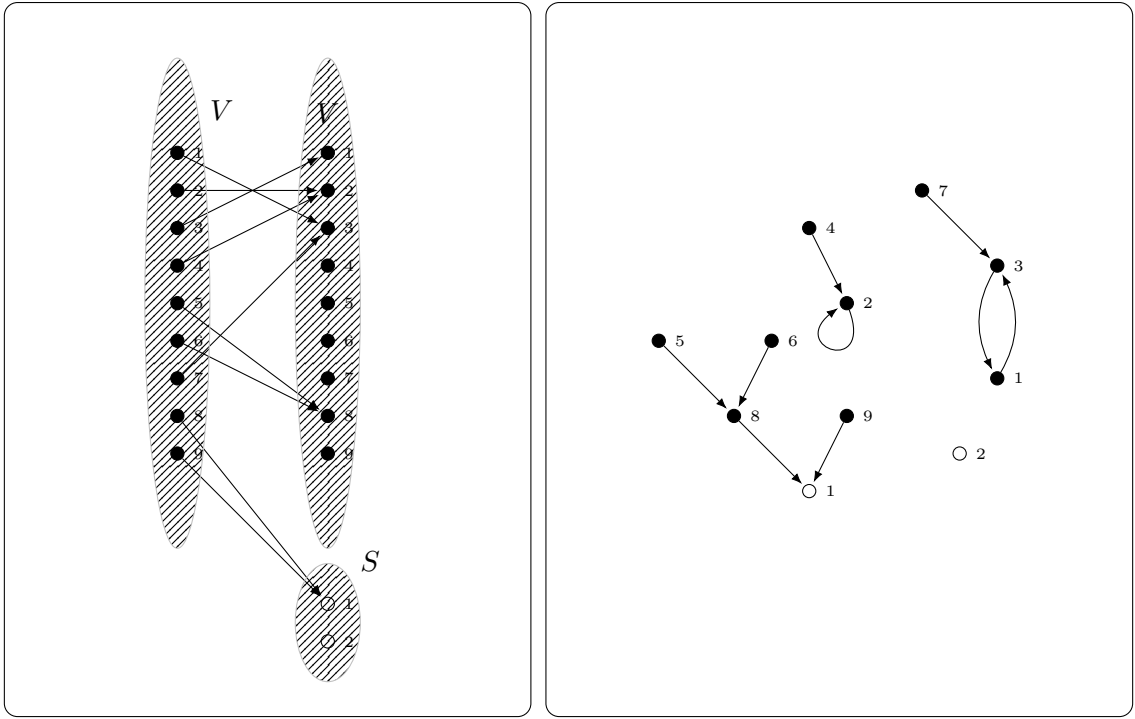
\begin{figure}
 \begin{tikzpicture}[scale=1,every label/.style ={font=\tiny}, 
   vertex/.style={minimum size=5pt,inner sep=0pt, fill,draw,circle},
   overtex/.style={minimum size=5pt,inner sep=0pt ,draw,circle},
open/.style={fill=none}, 
cross/.style={minimum size=2pt,fill,draw,circle},
sibling distance=1cm,
level distance=1.00cm, 
every fit/.style={ellipse,draw,inner sep=6.5pt}, 
arrow/.tip = {Latex[length=1.5mm,sep=1pt,bend]},
every fit/.style={ellipse,draw,inner sep=6pt}
]


\draw[rounded corners=5pt]  (-2,6) rectangle (5,-3.5);
\draw[rounded corners=5pt]  (5.2,6) rectangle (13,-3.5);

\begin{scope}[xshift=0.3cm]
\coordinate (dis) at (0,0.5);
\draw (0,0)  node (x9) [vertex,label={right:$9$}] {}
++ (dis) node (x8) [vertex,label={right:$8$}] {}
++ (dis) node (x7) [vertex,label={right:$7$}] {}
++ (dis) node (x6) [vertex,label={right:$6$}] {}
++ (dis) node (x5) [vertex,label={right:$5$}] {}
++ (dis) node (x4) [vertex,label={right:$4$}] {}
++ (dis) node (x3) [vertex,label={right:$3$}] {}
++ (dis) node (x2) [vertex,label={right:$2$}] {}
++ (dis) node (x1) [vertex,label={right:$1$}] {}
;

\draw (2,-2.5) node (y2) [overtex,label={right:$2$}] {}
++ (dis) node (y1) [overtex,label={right:$1$}] {};
\draw (2,0) node (xx9) [vertex,label={right:$9$}] {}
++ (dis) node (xx8) [vertex,label={right:$8$}] {}
++ (dis) node (xx7) [vertex,label={right:$7$}] {}
++ (dis) node (xx6) [vertex,label={right:$6$}] {}
++ (dis) node (xx5) [vertex,label={right:$5$}] {}
++ (dis) node (xx4) [vertex,label={right:$4$}] {}
++ (dis) node (xx3) [vertex,label={right:$3$}] {}
++ (dis) node (xx2) [vertex,label={right:$2$}] {}
++ (dis) node (xx1) [vertex,label={right:$1$}] {}
;
\draw (2,0) ++(dis)++(dis)++(dis) + (0.2,0) node (z1) {}
+ (-0.2,0) node (z2) {} ; 
\draw[-arrow] (x1) edge (xx3)
                      (x3) edge (xx1)
                      (x7) edge (xx3)
                      (x4) edge (xx2)
                      (x2) edge (xx2)
                      (x5) edge (xx8)
                      (x6) edge (xx8)
                      (x8) edge (y1)
                      (x9) edge (y1);
                      
  \node[color=lightgray, pattern=north east lines, fit= (y1) (y2), label={[font=\large]north east:$S$}] {}; 
  \node[color=lightgray, pattern=north east lines, fit = (x1) (x2) (x3) (x4) (x5) (x6) (x7) (x8) (x9), label={[font=\large]north east:$V$}] {};                   
   \node[color=lightgray, pattern=north east lines, fit = (xx1) (xx2) (xx3) (xx4) (xx5) (xx6) (xx7) (xx8) (xx9), label={[font=\large, label distance =-1cm]90:$V$}] {};                   
\end{scope}

\begin{scope}[xshift=8.7cm,yshift=-0.5cm]
\draw (0.5,1)  node (x9) [vertex,label={right:$9$}] {};
\draw (-1,1)  node (x8) [vertex,label={right:$8$}] {};
\draw (1.5,4)  node (x7) [vertex,label={right:$7$}] {};
\draw (-0.5,2)  node (x6) [vertex,label={right:$6$}] {};
\draw (-2,2)  node (x5) [vertex,label={right:$5$}] {};
\draw (0,3.5)  node (x4) [vertex,label={right:$4$}] {};
\draw (2.5,3)  node (x3) [vertex,label={right:$3$}] {};
\draw (0.5,2.5)  node (x2) [vertex,label={right:$2$}] {};
\draw (2.5,1.5)  node (x1) [vertex,label={right:$1$}] {};
\draw (0,0)  node (y1) [overtex,label={right:$1$}] {};
\draw (2,0.5)  node (y2) [overtex,label={right:$2$}] {};

\draw[-arrow] (x1) edge [bend right] (x3)
                      (x3) edge [bend right] (x1)
                      (x7) edge (x3)
                      (x4) edge (x2)
                      (x2) edge [out=-70, in=-150, distance=1cm] (x2)
                      (x5) edge (x8)
                      (x6) edge (x8)
                      (x8) edge (y1)
                      (x9) edge (y1);
\end{scope}

\end{tikzpicture}
\caption{\label{fig:endograph} Relation between partial endofunction  and the associated functional digraph. On the left hand side an endofunction in the usual representation of a functions is given. On the right hand side the functional digraph associated to the endofunction in the left box is drawn.}
\end{figure} 

In order to take into account the exponential $\exp( A(q;z))$ in our formulas, we enrich partial endofunctions with an additional structure: To each element $v\in V \cup S$  we add a set partition $ P_v$ of the elements of the preimage $f^{-1}(\{v\})$ (the preimage is often called \emph{fiber} of $v$).  This is a special case of the $R$-enriched structures often used in combinatorial proofs of  Lagrange inversion formula~\cite[Definitions~3.1.1 and~3.1.8]{bergeron-labelle-leroux1998book}. It is customary to represent $R$-structure as a structure on the incoming edges of the functional digraph; in our case, $ P_v$ is represented as a set partition of the incoming edges $(w,v)$. 
\begin{definition} \label{def:decomaps}
Let $V$ and $S$ be finite possibly empty disjoint sets.
 If $V$ is empty, we set $\mathcal{M}^S[V] = \mathcal{M}^S[\varnothing] := \varnothing$. If $V$ is non-empty, 
we define $\mathcal{M}^S[V]$ to be the collection of pairs $\bar f = (f, (P_v)_{v\in V\cup S})$ such that 
\begin{itemize} 
	\item $f$ is a map $f:V\to V\cup S$. 
	\item For each $v\in V\cup S$,  $P_v$ is a partition into non-empty sets of the preimage $f^{-1}(\{v\})$. If the preimage is empty we set $P_v = \varnothing$. 
\end{itemize} 
An element $\bar f= (f,( P_v)_{v\in V\cup S}) \in \mathcal M^S[V]$ is called \emph{enriched partial endofunction} on $V\cup S$, with domain $V$ and sink set $S$, or \emph{enriched map} for short. 
\end{definition}

\begin{figure}

\includegraphics{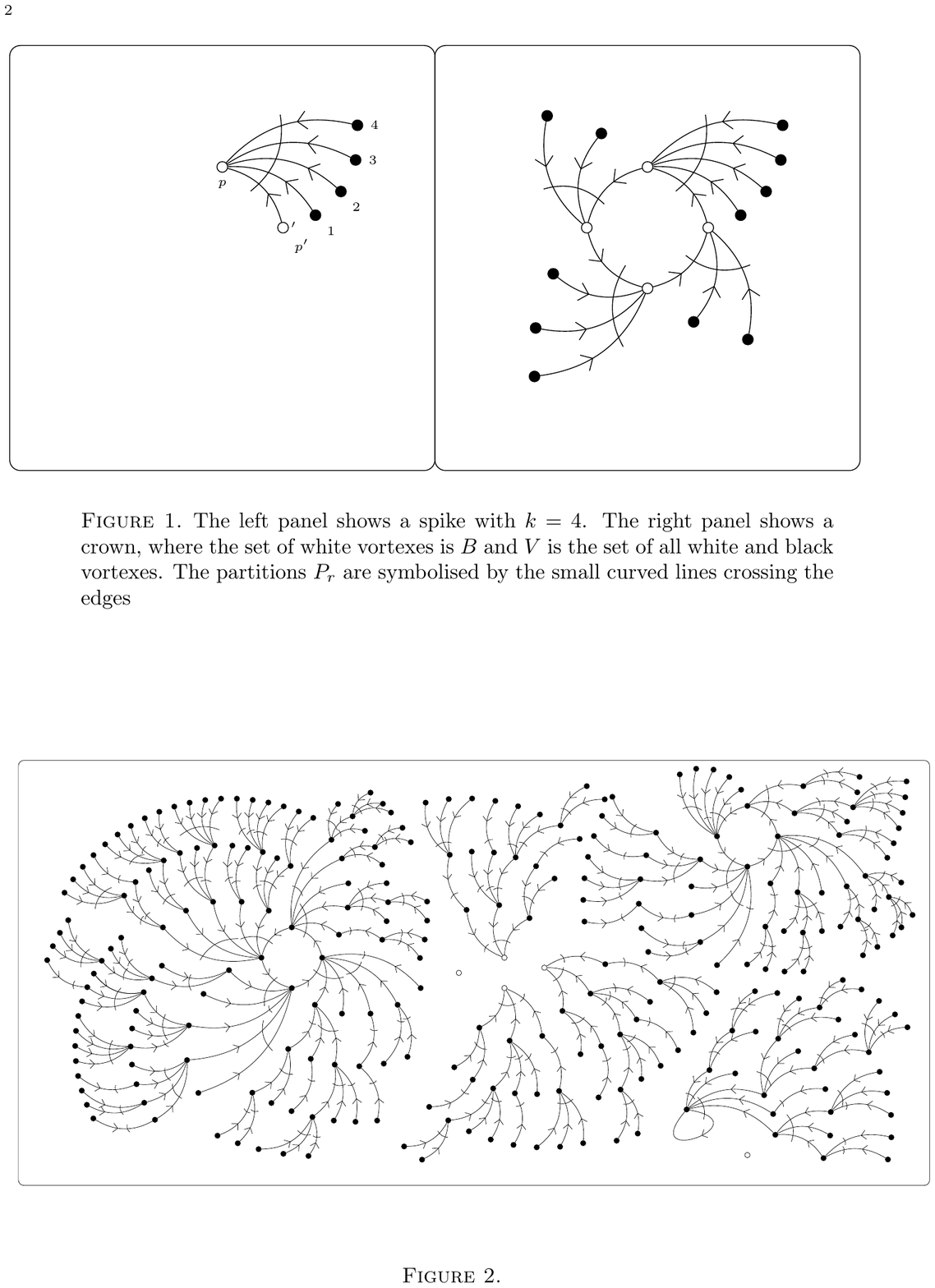}
\caption{\label{fig:ForestFull}Digraph of a typical enriched partial endofunction $\mathcal M^S[V]$  containing vertex-rooted trees and cycle-rooted trees. The vertices from $V$ are depicted as black filled circles, whereas the sinks $S$ are depicted as unfilled circles. The small lines crossing the edges of the graph represent the blocks $W$ of the partitions $P_v$. For more details cf. Figure~\ref{fig:weight}. 
}
\end{figure}

\noindent To lighten notation, if $S= \varnothing$ we drop the superscript $S$ and write $\mathcal M[V]$.  
If $S = \{\circ\}$ is a singleton we drop the braces in the notation and write $\mathcal M^\circ[V]$ instead of $\mathcal{M}^{\{\circ\}}[V]$.

A digraph $G$ is \emph{connected} if any two distinct vertices $v,w$ can be connected by a path from $v$ to $w$ or from $w$ to $v$. The connected components of  functional digraphs are of two types: trees and cycle-rooted trees, by which we mean the following. 

\begin{definition}
	Let $G= (V\cup S, E)$ be a functional digraph with domain $V$, that is a directed graph such that every vertex $v\in V$ has out-degree $1$ and every vertex $w\in S$ has out-degree zero. 
	\begin{enumerate}
		\item [(a)] A \emph{cycle} in $G$ is a sequence $v_1,\ldots, v_n$ in $V$ such that $(v_i,v_{i+1})\in E$ for all $i\in \{1,\ldots, n-1\}$ and $(v_n,v_1) \in E$. This includes loops ($n=1$). 
		\item [(b)] $G$ is a \emph{vertex-rooted tree} if it is connected and it has no cycle. Then $S$ is necessarily a singleton $ =\{\circ\}$, the root is the unique sink $\circ$.
		\item [(c)] $G$ is a \emph{cycle-rooted tree} if it is connected and it has  exactly one cycle. Then $S$ must be empty.The unique cycle in $G$ is called the \emph{root cycle} of $G$. 
	\end{enumerate} 
\end{definition} 

\noindent By some abuse of language an enriched map or graph are called vertex-rooted or cycle-rooted tree if the underlying graph $G$ is vertex- or cycle-rooted tree, respectively. The class of enriched maps $\bar f\in \mathcal M^\circ [V]$ that are vertex-rooted trees is denoted $\mathcal{T}^\circ(V)$.

Finally we define weights of enriched maps. For $\bar f = (f,(P_v)_{v\in V\cup S})\in \mathcal M^S[V]$ and $\vect x\in \mathbb X^{V\cup S}$, the weight of $\bar f$, given the coloring $\vect x$, is 
\be \label{mapweights}
	w( f, (P_v)_{v\in V\cup S};\vect x) := \prod_{\heap{v\in V\cup S:}{f^{-1}(v) \neq \varnothing}} \prod_{W\in P_v} A_{\#W}(x_v; (x_w)_{w\in W}).
\ee
We denote for short  $ {\vect x}_W:=(x_w)_{w\in W}$. For a graphical representation see Figure~\ref{fig:weight}.

\begin{figure}
\newcommand\arrowonline[1]{node 
[pos=#1,sloped,anchor=center,draw=none,fill=none,shape=rectangle,inner 
   sep=0pt,outer sep=0pt] {\tikz\draw[-{Straight Barb[length=1.5mm,sep=10pt,bend]},dash pattern=on 0pt off 2pt] 
(-1pt,0pt) -- (0pt,0pt);}} 
\newcommand\arrowonlinel[1]{node 
[pos=#1,sloped,anchor=center,draw=none,fill=none,shape=rectangle,inner 
   sep=0pt,outer sep=0pt] {\tikz\draw[{Straight Barb[length=1.5mm,sep=10pt,bend]}-,dash pattern=on 0pt off 2pt] 
(-1pt,0pt) -- (0pt,0pt);}}

\begin{tikzpicture}[scale=1,every label/.style ={font=\tiny}, 
   vertex/.style={minimum size=5pt,inner sep=0pt, fill,draw,circle},
   overtex/.style={minimum size=5pt,inner sep=0pt ,draw,circle},
open/.style={fill=none}, 
cross/.style={minimum size=2pt,fill,draw,circle},
sibling distance=1cm,
level distance=1.00cm, 
every fit/.style={ellipse,draw,inner sep=6.5pt}, 
arrow/.tip = {Latex[length=1.5mm,sep=10pt,bend]},
every fit/.style={ellipse,draw,inner sep=6pt}
]

\draw[rounded corners=5pt]  (-1,4) rectangle (7,-4);
\draw (0,0) node (z1) [overtex,label={north west:$x_0$}] {};
\path (70:3) arc [start angle=70, radius=3cm, end angle=-40] 
    node (x1) [pos=0.0,vertex,label={north:$x_1$}] {}
    node (x2) [pos=0.2,vertex,label={north east:$x_2$}] {}
    node (x3) [pos=0.4,vertex,label={right:$x_3$}] {}
    node (x4) [pos=0.6,vertex,label={south:$x_4$}] {}
    node (x5) [pos=0.8,vertex,label={south:$x_5$}] {}
    node (x6) [pos=1,vertex,label={south:$x_6$}] {};

\draw (x1) -- (z1) \arrowonlinel{0.3} ;
\draw (x2) -- (z1) \arrowonlinel{0.3} ;
\draw (x3) -- (z1) \arrowonlinel{0.3} ;
\draw (x4) -- (z1) \arrowonlinel{0.3} ;
\draw (x5) -- (z1) \arrowonlinel{0.3} ;
\draw (x6) -- (z1) \arrowonlinel{0.3} ;

\tikzmath{ real  \k{st} \k{en};  \k{st} =83; \k{en}= 12;\r{ad}=1;}
\draw[-,thick] (\k{st}:\r{ad}cm) node [label={$W_1$}]{} arc [start angle=\k{st} , end angle=\k{en}, radius=\r{ad}cm]; 

\tikzmath{ real  \k{st} \k{en};  \k{st} =13; \k{en}= -31; \r{ad}=1.2;}
\draw[-,thick] (\k{st}:\r{ad}cm)  arc [start angle=\k{st} , end angle=\k{en}, radius=\r{ad}cm]
node [pos=0.5,label={right:$W_2$}]{} ; 

\tikzmath{ real  \k{st} \k{en};  \k{st} =-31; \k{en}= -49;  \r{ad}=1;}
\draw[-,thick] (\k{st}:\r{ad}cm)  arc [start angle=\k{st} , end angle=\k{en}, radius=\r{ad}cm] node [label={$W_3$}]{};

\begin{scope}[shift=(x4)]
\draw (0,0) node (z1) 
{};
\path (50:2.5) arc [start angle=50, radius=2.5cm, end angle=-20] 
    node (x1) [pos=0.0,vertex,label={north:$x_7$}] {}
    node (x2) [pos=0.3,vertex,label={north east:$x_8$}] {}
    node (x3) [pos=0.6,vertex,label={right:$x_9$}] {}
    node (x4) [pos=0.9,vertex,label={right:$x_{10}$}] {};

\draw (x1) -- (z1) \arrowonlinel{0.3} ;
\draw (x2) -- (z1) \arrowonlinel{0.3} ;
\draw (x3) -- (z1) \arrowonlinel{0.3} ;
\draw (x4) -- (z1) \arrowonlinel{0.3} ;

\tikzmath{ real  \k{st} \k{en};  \k{st} =60; \k{en}= -2;\r{ad}=1;}
\draw[-,thick] (\k{st}:\r{ad}cm) node [label={[label distance=-0.3cm]south west:$W_4$}]{} arc [start angle=\k{st} , end angle=\k{en}, radius=\r{ad}cm];

\tikzmath{ real  \k{st} \k{en};  \k{st} =-5; \k{en}= -25;  \r{ad}=1.2;}
\draw[-,thick] (\k{st}:\r{ad}cm)  arc [start angle=\k{st} , end angle=\k{en}, radius=\r{ad}cm] node [label={[label distance=-0.1cm]left:$W_5$}]{}; 

\end{scope}

\begin{scope}[shift=(x5)]
\draw (0,0) node (z1) 
{};
\path (-5:2.5) arc [start angle=-5, radius=2.5cm, end angle=-40] 
    node (x1) [pos=0.0,vertex,label={right:$x_{11}$}] {}
    node (x2) [pos=0.5,vertex,label={right:$x_{12}$}] {}
    node (x4) [pos=1,vertex,label={right:$x_{13}$}] {};

\draw (x1) -- (z1) \arrowonlinel{0.3} ;
\draw (x2) -- (z1) \arrowonlinel{0.3} ;
\draw (x4) -- (z1) \arrowonlinel{0.3} ;

\tikzmath{ real  \k{st} \k{en};  \k{st} =8; \k{en}= -32;\r{ad}=1;}
\draw[-,thick] (\k{st}:\r{ad}cm) node [label={[label distance=-0.1cm]east:$W_6$}]{} arc [start angle=\k{st} , end angle=\k{en}, radius=\r{ad}cm];

\tikzmath{ real  \k{st} \k{en};  \k{st} =-30; \k{en}= -50;  \r{ad}=1.2;}
\draw[-,thick] (\k{st}:\r{ad}cm)  arc [start angle=\k{st} , end angle=\k{en}, radius=\r{ad}cm] node [label={[label distance=-0.2cm]south:$W_7$}]{}; 

\end{scope}


\begin{scope}[xshift=8.2cm]
\draw[rounded corners=5pt]  (-1,4) rectangle (7,-4);
\draw (0,0.3) node (z11) [overtex,label={north west:$$}] {};
\draw (0,0) node (z12) [overtex,label={west:$x_0$}] {};
\draw (0,-0.3) node (z13) [overtex,label={north west:$$}] {};
\path (70:3) arc [start angle=70, radius=3cm, end angle=-40] 
    node (x1) [pos=0.0,vertex,label={north:$x_1$}] {}
    node (x2) [pos=0.2,vertex,label={north east:$x_2$}] {}
    node (x3) [pos=0.4,vertex,label={right:$x_3$}] {}
    node (x4) [pos=0.6,vertex,label={south:$ $}] {}
    node (x5) [pos=0.8,vertex,label={south:$ $}] {}
    node (x6) [pos=1,vertex,label={south:$x_6$}] {};

\draw (x1) -- (z11) \arrowonlinel{0.3} ;
\draw (x2) -- (z11) \arrowonlinel{0.3} ;
\draw (x3) -- (z11) \arrowonlinel{0.3} ;
\draw (x4) -- (z12) \arrowonlinel{0.3} ;
\draw (x5) -- (z12) \arrowonlinel{0.3} ;
\draw (x6) -- (z13) \arrowonlinel{0.3} ;

 \node[fit= (z11) (z12) (z13), inner sep=1pt] {} ;

\tikzmath{ real  \k{st} \k{en};  \k{st} =85; \k{en}= 22;\r{ad}=1;}
\draw[-,thick] (\k{st}:\r{ad}cm) node [label={$W_1$}]{} arc [start angle=\k{st} , end angle=\k{en}, radius=\r{ad}cm]; 

\tikzmath{ real  \k{st} \k{en};  \k{st} =14; \k{en}= -33; \r{ad}=1.2;}
\draw[-,thick] (\k{st}:\r{ad}cm)  arc [start angle=\k{st} , end angle=\k{en}, radius=\r{ad}cm]
node [pos=0.5,label={right:$W_2$}]{} ; 

\tikzmath{ real  \k{st} \k{en};  \k{st} =-35; \k{en}= -60;  \r{ad}=1;}
\draw[-,thick] (\k{st}:\r{ad}cm)  arc [start angle=\k{st} , end angle=\k{en}, radius=\r{ad}cm] node [label={$W_3$}]{};

\begin{scope}[shift=($(x4)+(20:0.25)$)]
\draw (0,0.125) node (z11) [overtex,label={north west:$ $}] 
{};
\draw (0.04,-0.125) node (z12) [overtex,label={north west:$ $}] 
{};
\path (50:2.5) arc [start angle=50, radius=2.5cm, end angle=-20] 
    node (x1) [pos=0.0,vertex,label={north:$x_7$}] {}
    node (x2) [pos=0.3,vertex,label={north east:$x_8$}] {}
    node (x3) [pos=0.6,vertex,label={right:$x_9$}] {}
    node (x44) [pos=0.9,vertex,label={right:$x_{10}$}] {};

\draw (x1) -- (z11) \arrowonlinel{0.3} ;
\draw (x2) -- (z11) \arrowonlinel{0.3} ;
\draw (x3) -- (z11) \arrowonlinel{0.3} ;
\draw (x44) -- (z12) \arrowonlinel{0.3} ;

 \node[fit= (z11) (z12) (x4), inner sep=1pt] [label={south:$x_4$}] {} ;

\tikzmath{ real  \k{st} \k{en};  \k{st} =61; \k{en}= 3;\r{ad}=1;}
\draw[-,thick] (\k{st}:\r{ad}cm) node [label={[label distance=-0.3cm]south west:$W_4$}]{} arc [start angle=\k{st} , end angle=\k{en}, radius=\r{ad}cm];

\tikzmath{ real  \k{st} \k{en};  \k{st} =-7; \k{en}= -27;  \r{ad}=1.2;}
\draw[-,thick] (\k{st}:\r{ad}cm)  arc [start angle=\k{st} , end angle=\k{en}, radius=\r{ad}cm] node [label={[label distance=-0.1cm]left:$W_5$}]{}; 

\end{scope}

\begin{scope}[shift=($(x5)+(-30:0.2)$)]
\draw (0.125,0.125) node (z15) [overtex,label={north west:$ $}] 
{};
\draw (0,-0.125) node (z16) [overtex,label={north west:$ $}] 
{};

\path (-5:2.5) arc [start angle=-5, radius=2.5cm, end angle=-40] 
    node (x1) [pos=0.0,vertex,label={right:$x_{11}$}] {}
    node (x2) [pos=0.5,vertex,label={right:$x_{12}$}] {}
    node (x4) [pos=1,vertex,label={right:$x_{13}$}] {};

\draw (x1) -- (z15) \arrowonlinel{0.3} ;
\draw (x2) -- (z15) \arrowonlinel{0.3} ;
\draw (x4) -- (z16) \arrowonlinel{0.3} ;

 \node[fit= (z15) (z16) (x5), inner sep=1pt] [label={south:$x_5$}] {} ;

\tikzmath{ real  \k{st} \k{en};  \k{st} =8; \k{en}= -26;\r{ad}=1;}
\draw[-,thick] (\k{st}:\r{ad}cm) node [label={[label distance=-0.1cm]east:$W_6$}]{} arc [start angle=\k{st} , end angle=\k{en}, radius=\r{ad}cm];

\tikzmath{ real  \k{st} \k{en};  \k{st} =-34; \k{en}= -54;  \r{ad}=1.2;}
\draw[-,thick] (\k{st}:\r{ad}cm)  arc [start angle=\k{st} , end angle=\k{en}, radius=\r{ad}cm] node [label={[label distance=-0.2cm]south:$W_7$}]{}; 

\end{scope}

\end{scope}


%
%
%
%
%

\end{tikzpicture}

\caption{\label{fig:weight}
The weight~\eqref{mapweights} of an enriched digraph is a product of contributions of building blocks. The left panel represents an enriched digraph with weight
\[
	A_{3}(x_0 ; x_1, x_2 ,x_3) A_{2}(x_0 ; x_4, x_5)
A_{1}(x_0 ; x_6)A_{2}(x_4 ; x_7, x_8) A_{1}(x_4 ; x_{10})
A_{2}(x_5 ; x_{11}, x_{12}) A_{1}(x_5 ; x_{13}).
\]
The right panel separates the graph into the single building blocks, each of which corresponds to a factor  $A_{\#W}(x_v ; (x_w)_{w \in W})$ in the graph weight. The partitions are given by $W_1 = \{ 1,2,3\}$, $W_2=\{ 4, 5 \}$, $W_3 =\{ 6 \}$, $W_4=\{ 7,8,9\}$, $W_5 = \{ 10\}$, $W_6=\{ 11, 12\}$, $W_7 = \{13\}$ and the partitions are $P_0 = \{ W_1 ,W_2 ,W_3\}$, $P_4 = \{ W_4, W_5 \}$, and $P_5 = \{ W_6, W_7 \}$. 
}
\end{figure}

\subsection{The tree formula for the inverse power series}  \label{sec:treeinverse}
 
For our further calculation we need a combinatorial expression for the inverse power series. Such a representation can be obtained directly without using Lagrange-Good type formulas. We briefly recall the tree formula for the inverse power series proven in~\cite{jansen-kuna-tsagkaro2019virialinversion}. Let 
\begin{equation}\label{tnn}
	t_n(x_\circ; x_1,\ldots,x_n):=\sum_{\bar f\in \mathcal{T}^\circ[n]}
	 w\left(\bar f; (x_j)_{j\in [n]\cup\{\circ\}}\right)
\end{equation}
and 
\be \label{treegen}
	T_q^\circ(\nu) := 1+\sum_{n=1}^\infty \frac{1}{n!} \int_{\mathbb X^n} t_n(x_\circ; x_1,\ldots,x_n) \nu (\dd x_1)\cdots \nu(\dd x_n)\quad x_\circ =q.
\ee 

\begin{lemma} \label{lem:treefip}
	The family $(T_q^\circ(\nu))_{q\in \mathbb X}$ fulfils the following  functional equation as formal power series
\be \label{tree-eq-again}
	T_q^\circ(\nu) = \exp\left( \sum_{n=1}^\infty \frac{1}{n!} \int_{\mathbb X^n} A_n(q;x_1,\ldots,x_n) T_{x_1}^\circ(\nu)\cdots T_{x_n}^\circ( \nu)  \nu(\dd x_1)\cdots  \nu(\dd x_n)\right).
\ee
\end{lemma}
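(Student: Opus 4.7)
The plan is to prove the functional equation by comparing coefficients of order $n$ on both sides, using the standard combinatorial decomposition of a vertex-rooted tree at its root. This is essentially the species identity $T^\circ = X \cdot \mathrm{Exp}(A[T^\circ])$ unrolled in the present uncountable, measure-valued setting.

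First, for each $n \in \N_0$ and $(x_1,\dots,x_n)\in\mathbb X^n$, the coefficient of order $n$ of $T_q^\circ(\nu)$ is by definition
\[
    t_n(q;x_1,\dots,x_n)=\sum_{\bar f\in\mathcal T^\circ[[n]]}w\bigl(\bar f;(q,x_1,\dots,x_n)\bigr).
\]
Expanding the exponential and the inner sum on the right-hand side of \eqref{tree-eq-again}, multiplying out, and then expanding each $T_{x_j}^\circ(\nu)$ via \eqref{treegen}, the coefficient of order $n$ becomes a sum over: the number $k\ge 0$ of blocks in a partition at the root, the tuple of block sizes $(n_1,\dots,n_k)$ with $n_1+\dots+n_k\le n$, an ordered allocation of the ``block-vertex labels'' among $[n]$, and, for each such block-vertex $w$, a vertex-rooted tree on the remaining labels with sink $w$. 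The combinatorial weight reads $\tfrac{1}{k!}\prod_i \tfrac{1}{n_i!}$ times the product of an $A_{n_i}$-factor per block and a $t_{m_w}$-factor per subtree.

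The core step is a bijection. Given $\bar f=(f,(P_v)_v)\in\mathcal T^\circ[[n]]$, let $S:=f^{-1}(\{\circ\})$, let $P_\circ=\{W_1,\dots,W_k\}$, and for each $w\in S$ let $V_w\subseteq[n]$ be the set of labels whose $f$-orbit reaches $w$ strictly before $\circ$ (so $w\in V_w$ and $\{V_w\}_{w\in S}$ partitions $[n]$). The restriction of $\bar f$ to $V_w$, with $w$ relabelled as the sink, yields a vertex-rooted enriched tree $\bar f_w\in\mathcal T^\circ[V_w\setminus\{w\}]$. Conversely, the datum $\bigl(S,P_\circ,(V_w)_{w\in S},(\bar f_w)_{w\in S}\bigr)$ reconstructs $\bar f$ uniquely. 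Under this bijection, the weight factorises according to \eqref{mapweights} as
\[
    w(\bar f;(q,\vect x))=\prod_{i=1}^k A_{|W_i|}\!\bigl(q;(x_w)_{w\in W_i}\bigr)\;\prod_{w\in S} w\bigl(\bar f_w;(x_v)_{v\in V_w}\bigr),
\]
since the factor at $v=\circ$ contributes exactly one $A_{|W_i|}$-term per block while each subtree contributes its own weight independently.

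Finally, the bookkeeping is closed by integrating against $\nu^n(\dd\vect x)$ and exploiting symmetry. Summing the factorised weight over all $\bar f$ with $|S|=s$ and $|V_w|=m_w+1$ produces a multinomial coefficient $\binom{n}{m_{w_1}+1,\dots,m_{w_s}+1}$ that cancels the factorials $n!$ in $\tfrac{1}{n!}\int(\cdots)\nu^n(\dd\vect x)$ when compared with the factor $\prod_w \tfrac{1}{m_w!}$ sitting inside each $T_{x_w}^\circ$, the $\tfrac{1}{n_i!}$ inside the $A_{n_i}$-term, and the $\tfrac{1}{k!}$ from the exponential (which matches the $k!$ orderings of the blocks $W_1,\dots,W_k$). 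The identity of coefficients at every order $n$ then follows. The only real obstacle is the combinatorial bookkeeping of these factorials; once organised as above via the exponential formula for set partitions of the root's children, everything matches term by term, and the identity \eqref{tree-eq-again} holds as an equality of formal power series.
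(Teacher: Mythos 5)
Your proposal is correct and follows essentially the same route as the paper: decompose an enriched vertex-rooted tree at its root into the partition $P_\circ$ of the root's children and the subtrees hanging off each child, observe that the weight factorises accordingly, and match this against the coefficient formulas for products and exponentials of formal power series. The only cosmetic difference is that you track the multinomial/factorial bookkeeping by hand, whereas the paper absorbs it by directly identifying the inner sum $B(q;V_i)$ with the coefficients of the series inside the exponential and then invoking the set-partition formula \eqref{eq:formal-exponential}.
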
 

\noindent For finite spaces $\mathbb X$, the lemma follows from Theorem~2 in Chapter~3.2 on implicit species in the book by Bergeron, Labelle, and Leroux \cite{bergeron-labelle-leroux1998book}. 

\begin{proof}
	The lemma follows from~\cite[Lemma~2.1 and Proposition~2.6]{jansen-kuna-tsagkaro2019virialinversion}, we give a self-contained proof for the reader's convenience.
 Calling $\{V_1, \ldots , V_m\}$ the partition $P_\circ$ of the vertices incident to the root $\circ$, we can rewrite $t_n$ using the tree structure of the associated graph as
\begin{equation}
\sum_{\bar f\in \mathcal{T}^\circ[n]} w\left(\bar f; (x_j)_{j\in [n]\cup\{\circ\}}\right) = \sum_{m=1}^n  \sum_{\substack{\{V_1,\ldots,V_m\}
\\ \text{partition of}\,[n]}} \prod_{i=1}^m
B(q;V_i), \label{eqtreedec}
\end{equation}
where we defined:
\begin{equation}\label{insidedd}
B(q;V_i):=
\sum_{\substack{L_i\subset V_i \\ L_i\neq \varnothing}}
A_{\# L_i}(q;\vect x_{L_i})\sum_{\substack{(J_k)_{k\in L_i} \\ \text{partition of}\, V_i\setminus L_i}} \prod_{k\in L_i} t_{\# J_k}(x_k;\vect x_{J_k}).
\end{equation}
(Compare with the calculations (formulas (26)-(28)) 
in the proof of Theorem 1 in \cite{abdesselam2003jacobian}].)
Note that according to \eqref{eq:formal-product} the expression $B(q;V_i)$ is the $\# V_i$-th. coefficient  in powers of $\nu$ of 
\begin{equation}\label{eq:expef}
 \sum_{n=1}^\infty \frac{1}{n!} \int_{C^n} A_n(q;x_1,\ldots,x_n) T_{x_1}^\circ(\nu)\cdots T_{x_n}^\circ( \nu) \nu(\dd x_1)\cdots \nu(\dd x_n) .
 \end{equation}
According to~\eqref{eq:formal-exponential}, the right hand side of \eqref{eqtreedec} is the $n$-th coefficient of the exponential of  \eqref{eq:expef}.
\end{proof} 

When dealing with generating functions of rooted trees we have two choices, either look at trees rooted in a ghost or trees rooted in a labelled vertex. In the univariate case (no colors) these give rise to two different generating functions $T^\circ(\nu)$ and $T^\bullet (\nu)$ related by $T^\bullet (\nu)= \nu T^\circ (\nu)$. The multivariate case (finitely many colors) gives rise to a relation of the type $ T_i^\bullet (\nu)= \nu_i T^\circ_i(\vect \nu) $, with $i$ the color of the root. This last relation should remind the reader of the relation $ \zeta(\dd q;\nu) = \nu(\dd q) T_q^\circ (\nu) $ used to define $\zeta(\dd q;\nu)$ in~\eqref{eq:zetadef}. As a consequence, we should expect that the measure valued series $\zeta(\dd q;\nu)$ corresponds to rooted trees with the root integrated over. 

The next lemma makes this statement precise. It says that the series $\zeta(B;\nu)$ is given by a sum over rooted trees with root color in $B$. 

\begin{lemma} \label{lem:treeformula}
	Let $\zeta(\dd q;z) = T_q^\circ(q;z) z(\dd q)$. 
	The $n$-th coefficient of $\zeta(B;\nu)$, evaluated at $(q_1,\ldots, q_n)$, is equal to 
	\be \label{eq:treeformula}
		\zeta_n(B;q_1,\ldots, q_n) = \sum_{r=1}^n \1_B(q_r) t_{n-1}\bigl( q_r;(q_i)_{i\in [n]\setminus\{r\}}\bigr).
	\ee
	and the term of order zero vanishes, $\zeta_0(B) =0$. 
\end{lemma}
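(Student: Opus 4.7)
The plan is to prove Lemma~\ref{lem:treeformula} by a direct expansion of the defining formula $\zeta(\dd q;\nu) = T_q^\circ(\nu)\,\nu(\dd q)$ using the series representation~\eqref{treegen} of $T_q^\circ$, and then to symmetrize the integrand in order to read off the coefficients $\zeta_m(B;\vect q)$.

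First I would substitute~\eqref{treegen} into $\zeta(B;\nu) = \int_B T_q^\circ(\nu)\,\nu(\dd q)$, using the formulas for integration and scalar--measure products of formal power series collected in Appendix~\ref{ApFormal}. This gives
\[
\zeta(B;\nu) = \int_B \nu(\dd q) + \sum_{n=1}^\infty \frac{1}{n!}\int_{\mathbb X} \int_{\mathbb X^n} \1_B(q)\, t_n(q;x_1,\ldots,x_n)\,\nu(\dd q)\,\nu^n(\dd \vect x).
\]
Since the right-hand side has no constant term, we immediately obtain $\zeta_0(B)=0$.

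Next I would collect the contributions to the coefficient of order $m\ge 1$. The term of order $m$ is
\[
\frac{1}{(m-1)!}\int_{\mathbb X^m} \1_B(q)\, t_{m-1}(q;x_1,\ldots,x_{m-1})\,\nu(\dd q)\,\nu^{m-1}(\dd \vect x),
\]
which already has the structure of an iterated integral against $\nu^m$, but with an integrand that is symmetric only in $(x_1,\ldots,x_{m-1})$ and not in the full tuple of $m$ variables. The key step is the symmetrization: because $\nu^m$ is a symmetric product measure, replacing the integrand by its average over the $m$ choices of which variable plays the role of the root leaves the integral unchanged. Combined with the combinatorial identity $m \cdot \frac{1}{m!} = \frac{1}{(m-1)!}$, this shows that the $m$-th term equals
\[
\frac{1}{m!}\int_{\mathbb X^m} \Bigl( \sum_{r=1}^m \1_B(q_r)\, t_{m-1}\bigl(q_r;(q_i)_{i\in[m]\setminus\{r\}}\bigr)\Bigr)\nu^m(\dd \vect q),
\]
and by uniqueness of symmetric coefficients (Definition~\ref{def:fps}(a)) this expression is exactly~\eqref{eq:treeformula}. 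A quick check for $m=1$, where the sole contribution comes from the $n=0$ term $\int_B\nu(\dd q)$ and the right-hand side reduces to $\1_B(q_1)\, t_0(q_1;)=\1_B(q_1)$, confirms consistency.

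There is no genuine obstacle here; the only point requiring care is the bookkeeping of the factorials under symmetrization and the fact that the symmetrization is performed with respect to the full symmetric group $\mathfrak{S}_m$ acting on $(q_1,\ldots,q_m)$, whereas the original integrand is already symmetric under the subgroup $\mathfrak{S}_{m-1}$ permuting the $x$-variables. This is precisely what produces the sum over the root index $r\in[m]$ with no combinatorial overcount.
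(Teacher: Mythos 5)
Your proposal is correct and is essentially the paper's argument: the paper disposes of this lemma by citing the definition \eqref{eq:zetadef} together with \eqref{eq:formal-radon}, and your symmetrization computation (averaging over which of the $m$ variables plays the root, absorbing the factor $m$ into $1/(m-1)!=m/m!$) is precisely the formal derivation by which Appendix~\ref{ApFormal} motivates the coefficient formula \eqref{eq:formal-radon} in the first place. The only cosmetic point is that in the purely formal setting the coefficients are the primitive data, so the cleanest phrasing is to apply \eqref{eq:formal-radon} directly rather than to appeal to ``uniqueness of symmetric coefficients''; the content is the same.
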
 

\begin{proof}
	The lemma is an immediate consequence of the definition~\eqref{eq:zetadef} of $\zeta(\dd q;\nu)$ and Eq.~\eqref{eq:formal-radon}. 
\end{proof}

\begin{theorem}\label{thm:inversetree}
	For $\rho$ given in \eqref{rho} we have $(\zeta \circ \rho)(\dd q;z) = z(\dd q)$ and $ (\rho\circ \zeta)(\dd q;\nu) = \nu(\dd q)$ as an equality of measure-valued formal power series, with the composition defined in~\eqref{eq:fmc}. 
\end{theorem}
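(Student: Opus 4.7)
The heart of the proof is Lemma~\ref{lem:treefip}: $T_q^\circ$ was constructed so that~\eqref{tree-eq-again} expresses $\log T_q^\circ(\nu)$ as exactly $A(q;\zeta(\nu))$. Once one unpacks the composition of measure-valued formal power series given in~\eqref{eq:fmc}, both inverse identities reduce to this observation.

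\emph{Part 1, $(\rho\circ\zeta)(\dd q;\nu)=\nu(\dd q)$.} I would substitute $z(\dd q)$ by $\zeta(\dd q;\nu)=T_q^\circ(\nu)\,\nu(\dd q)$ in the defining identity $\rho(\dd q;z)=z(\dd q)\exp(-A(q;z))$, which gives
\bes
(\rho\circ\zeta)(\dd q;\nu)=T_q^\circ(\nu)\,\nu(\dd q)\,\exp\bigl(-A(q;\zeta(\nu))\bigr).
\ees
Expanding $A(q;\zeta(\nu))$ according to Definition~\ref{def:fps} yields exactly the argument of the exponential on the right-hand side of~\eqref{tree-eq-again}; hence by Lemma~\ref{lem:treefip} the factor $\exp(-A(q;\zeta(\nu)))$ cancels $T_q^\circ(\nu)$, leaving $\nu(\dd q)$. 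Here I am using, as everywhere in the paper, the standing convention that $A$ has vanishing zeroth-order term---built into the exponential form~\eqref{iteration}--\eqref{rho} and into the sum starting at $n=1$ in~\eqref{tree-eq-again}.

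\emph{Part 2, $(\zeta\circ\rho)(\dd q;z)=z(\dd q)$.} I would deduce this from Part~1 by a general formal-inversion argument rather than by another combinatorial computation. Since the order-one kernel of $\rho$ is $\rho_1(\dd q;x)=\delta_x(\dd q)$, the map $\rho$ is near-identity, and a standard triangular recursion on the order $n$ produces a unique measure-valued formal power series $\sigma$ satisfying $\sigma\circ\rho=\mathrm{Id}$: at each step the kernel $\sigma_n$ is determined by $\sigma_1,\ldots,\sigma_{n-1}$ together with the coefficients of $\rho$ up to order $n$. Combining this with Part~1 and associativity of the composition~\eqref{eq:fmc} gives $\sigma=\sigma\circ(\rho\circ\zeta)=(\sigma\circ\rho)\circ\zeta=\zeta$, and hence $\zeta\circ\rho=\sigma\circ\rho=\mathrm{Id}$. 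The main obstacle is to make this step rigorous in the uncountable-color setting: one needs associativity of~\eqref{eq:fmc} and unique solvability of the recursive equations for the symmetric measurable kernels $\sigma_n(\dd q;x_1,\ldots,x_n)$, both of which follow from the formalism of Appendix~\ref{ApFormal} but are where the uncountable case genuinely differs from the finite one of~\eqref{rho2k}. A self-contained alternative is to verify directly that $F_q(z):=T_q^\circ(\rho(z))$ and $G_q(z):=\exp(A(q;z))$ both solve the functional equation obtained from~\eqref{tree-eq-again} by substituting $\nu\leadsto\rho(z)$---the second of these using the cancellation $G_x(z)\,\rho(\dd x;z)=z(\dd x)$---whence uniqueness (in $z$, by induction on the order) forces $F_q=G_q$ and thus $(\zeta\circ\rho)(\dd q;z)=F_q(z)\,\rho(\dd q;z)=z(\dd q)$.
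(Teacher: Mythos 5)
Your proof is correct, and Part 1 coincides exactly with the paper's argument: substitute $\zeta[\nu](\dd q)=T_q^\circ(\nu)\,\nu(\dd q)$ into $\rho(\dd q;z)=z(\dd q)\e^{-A(q;z)}$, observe that $A(q;\zeta[\nu])$ is precisely the argument of the exponential in~\eqref{tree-eq-again}, and cancel. For Part 2 the paper only gestures at ``a similar argument and another fixed point equation'' (citing Lemma~2.12 of the earlier paper), which is exactly your self-contained alternative at the end: show that $T_q^\circ(\rho(z))$ and $\exp(A(q;z))$ solve the same recursively determined functional equation, the verification for the latter resting on the cancellation $\e^{A(x;z)}\rho(\dd x;z)=z(\dd x)$. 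Your primary route for Part 2 --- constructing a left inverse $\sigma$ of $\rho$ by a triangular recursion on the order (possible because $\rho_0=0$ and $\rho_1(\dd q;x)=\delta_x(\dd q)$) and then concluding $\sigma=\sigma\circ(\rho\circ\zeta)=(\sigma\circ\rho)\circ\zeta=\zeta$ --- is a genuinely different and more abstract argument. It buys generality (it works for any near-identity measure-valued series, not just those of exponential form) at the cost of requiring associativity of the composition~\eqref{eq:fmc} for three factors, which the paper itself invokes only informally and leaves to the reader; you rightly flag this as the point where the uncountable-color formalism must be checked. Either route closes the gap the paper leaves open, so the proposal is complete.
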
 

\begin{proof}[Proof sketch]
	By the definition of $\rho(\dd q; z) = \exp(-A(q;z)) z(\dd q)$ and Lemma~\ref{lem:treefip}, we have 
	\be
			\bigl(\rho \circ \zeta\bigr)(\dd q;\nu) = \e^{- A(q;\zeta[\nu])} \zeta(\dd q;\nu) = \e^{- A(q;\zeta[\nu])}  T_q^\circ (\nu) \nu(\dd q)  = \nu(\dd q).
	\ee
   The previous chain of equalities is formal but it can be rigorously justified by properties of operations on formal power series (e.g. associativity of the product), defined in terms of coefficients only; we leave the details to the reader.    
   The  equality $(\zeta \circ \rho)(\dd q;z) = z(\dd q)$ is proven by a similar argument and another fixed point equation, see~\cite[Lemma~2.12]{jansen-kuna-tsagkaro2019virialinversion}.
\end{proof}

\subsection{Cycle-rooted forests. Proof of Proposition~\ref{prop:magicformula}}

For the proof of Proposition~\ref{prop:magicformula}, we relate the right-hand side of  Eq.~\eqref{eq:magic} to a sum over cycle-rooted forests and look for combinatorial cancellations. We start with the exponential. Let $E_k^n(\vect{q};\vect{x})$ be the coefficients of the exponential  appearing in Proposition~\ref{prop:magicformula}, i.e., 
\begin{equation} \label{ekdef}
	\exp\left(\sum_{i=1}^n A(q_i;z)\right) 
	 =  1+ \sum_{k=1}^\infty \frac{1}{k!} \int_{\mathbb X^k} E^n_k(q_1,\ldots,q_n;x_1,\ldots,x_k)  z(\dd x_1)\cdots  z(\dd x_k).
\end{equation} 

\noindent

Figure~\ref{fig:ForestFull} depicts a forest of vertex-rooted and cycle-rooted trees. The vertices $[n]\setminus I$ are represented as white.

\begin{lemma} \label{lem:ek}
Let $E^n_k$ be as in~\eqref{ekdef}. 
For $n\in \N$, $\vect{q}\in \mathbb X^n$, and $I\subset[n]$, we have 
\be\label{E}
	E^n_{\#I} \bigl(q_1,\ldots, q_n;(q_i)_{i\in I}\bigr) = \sum_{\bar f\in  \mathcal{M}^{[n]\setminus I}(I)} w(\bar f;q_1,\ldots, q_n).
\ee
\end{lemma}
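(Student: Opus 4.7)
The plan is to compute the coefficient $E^n_{\#I}(\vect q;(q_i)_{i\in I})$ by applying the standard exponential formula for formal power series (Eq.~\eqref{eq:formal-exponential} from Appendix~\ref{ApFormal}) to the sum $\sum_{i=1}^n A(q_i;z)$, and then to exhibit a transparent bijection between the combinatorial data that arise and the set of enriched maps $\mathcal M^{[n]\setminus I}(I)$.

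First I would view $B(z):=\sum_{i=1}^n A(q_i;z)$ as a function-valued formal power series in $z$ whose constant term vanishes and whose coefficient of order $m\geq 1$ at $(x_1,\ldots,x_m)$ equals $\sum_{i=1}^n A_m(q_i;\vect x)$. The exponential formula then gives
\be \label{eq:planexp}
	E^n_k(\vect q;\vect x) = \sum_{P\vdash [k]}\, \prod_{W\in P}\Bigl(\sum_{i=1}^n A_{\#W}(q_i;\vect x_W)\Bigr),
\ee
the sum running over set partitions of $[k]$ into non-empty blocks. Distributing the $n$-fold sum over the product introduces an arbitrary ``colour'' of the blocks:
\be \label{eq:plancol}
	E^n_k(\vect q;\vect x) = \sum_{P\vdash[k]}\ \sum_{c:P\to[n]}\ \prod_{W\in P} A_{\#W}\bigl(q_{c(W)};\vect x_W\bigr).
\ee

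Next, specialising to $k=\#I$ and to the evaluation $x_j:=q_j$ for $j\in I$ (so that the index set $[k]$ is relabelled as $I$), I would identify each pair $(P,c)$ with $P\vdash I$ and $c:P\to[n]$ with an enriched map $\bar f=(f,(P_v)_{v\in[n]})\in\mathcal M^{[n]\setminus I}(I)$ via
\[
	P_v:=\{W\in P : c(W)=v\}, \qquad f(j):=c(W)\text{ for the unique } W\in P \text{ containing } j.
\]
Then $P_v$ is a set partition of the fibre $f^{-1}(\{v\})\subset I$, and the construction is inverse to $\bar f\mapsto\bigl(\bigsqcup_{v\in[n]}P_v,\,c\bigr)$ where $c(W):=v$ iff $W\in P_v$. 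Under this bijection the summand in~\eqref{eq:plancol} coincides with the weight $w(\bar f;\vect q)$ from~\eqref{mapweights}, because blocks in $P_v$ for a vertex $v$ with empty fibre simply fail to appear on either side.

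No step here is delicate; the only point requiring some care is the bookkeeping between the three equivalent encodings of the same data---the unordered set partition with a block colouring that comes out of the exponential formula, the ordered partition of $I$ into fibres $(f^{-1}(v))_{v\in[n]}$ obtained by first writing $\exp\bigl(\sum_i A(q_i;z)\bigr)=\prod_i \exp A(q_i;z)$, and the functional-digraph picture of~\eqref{mapweights}. I do not expect any serious obstacle beyond this bookkeeping.
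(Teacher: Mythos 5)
Your proof is correct and takes essentially the same route as the paper's: the paper first writes $\exp\bigl(\sum_{i=1}^n A(q_i;z)\bigr)=\prod_{i=1}^n\exp\bigl(A(q_i;z)\bigr)$ and applies the product formula~\eqref{eq:formal-higher-product} before the exponential/composition formula~\eqref{composition2}, whereas you apply the exponential formula to the sum first and then distribute the $n$-fold sum over the blocks, but both reduce to the same data (a set partition of $I$ together with an assignment of each block to one of the vertices $1,\ldots,n$) and the same identification with enriched maps in $\mathcal M^{[n]\setminus I}(I)$ and their weights~\eqref{mapweights}. There is no gap.
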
 

\noindent Notice that Lemma~\ref{lem:ek} does \emph{not} address general coefficients $E^n_{\#I}(q_1,\ldots, q_n;(x_i)_{i\in I})$, but only the special case $x_i = q_i$. Roughly, $E_{\#I}^n(q_1,\ldots, q_n; (x_i)_{i\in I})$ is a sum over enriched maps from $I$ to $\{1,\ldots, n\}$, but if $x_i \neq q_i$ there are color-conflicts and the color-dependent weight of a map is inappropriate to give a representation.

\begin{proof} 
By $\exp\left(\sum_{i=1}^n A(q_i;z)\right)  = \prod_{i=1}^n \exp \left( A(q_i;z)\right) $ and ~\eqref{eq:formal-higher-product} we can write for all $\vect{x}\in \mathbb X^I$
\be
	E^n_{\#I} (q_1,\ldots,q_n; \vect x_I) = \sum_{I_1,\ldots,I_n} \prod_{\substack{1\leq k \leq n\\ I_k\neq \varnothing}} E^1_{\# I_k}(q_k;\vect x_{I_k}), 
	\ee	
where $I_1,\ldots,I_n$ are pairwise disjoint subsets, with $I_k = \varnothing$ allowed, whose union is $I$ and ${\vect x}_I = (x_i)_{i \in I}$. By ~\eqref{composition2} we have that:
\be 
E^1_{\# I_k}(q_k;\vect x_{I_k}) =\sum_{P_k \in \mathcal{P}(I_k) }\prod_{W\in P_k} A_{\# W}(q_k;\vect x_W),
\ee
where $\mathcal{P}(I_k)$ is the collection of  set partitions $\{W_1,\ldots,W_r\}$ of $I_k$ into non-empty sets $W_i$. Combining the two we obtain that
\be\label{ecoeff}
	E^n_{\#I} (q_1,\ldots,q_n; \vect x_I)  =\sum_{\substack{I_1,\ldots,I_n\\ P_1,\ldots, P_n}} \prod_{\substack{1\leq k \leq n\\ I_k\neq \varnothing}} \prod_{W\in P_k} A_{\# W}(q_k;\vect x_W).
\ee
Each $n$-tuple $(I_1,\ldots,I_n)$ gives rise to a map $f:I\to [n]$ by defining $f^{-1}(\{k\}) = I_k$; the correspondence is clearly one-to-one.
Considering $\bar f = \left( f , (P_k)_{k \in [n]} \right)$ and $\vect x_I=\vect q_I$ we recognize the weight $w(\bar f,\vect q)$
and obtain \eqref{E}.
\end{proof} 

\noindent Next we turn to the determinant on the right-hand side of  Eq.~\eqref{eq:magic}.  Write 
\be \label{dkdef}
	\det \left( \mathrm{Id} 
		- z(\dd q)\frac{\delta}{\delta z(q')} A(q;z) \right)  = 1 + \sum_{k=1}^\infty \frac{1}{k!}\int_{C^k} D_k(x_1,\ldots,x_k)  z^n(\dd \vect x) ,
\ee
where $ Q= \{q_i\mid i=1,\ldots, n\}$ is defined as in \eqref{Q}. 
We start with the interpretation of the left-hand side as a formal power series, cf. \eqref{eq:fredformal},
\begin{align}
& \det \left( \mathrm{Id} 
		- z(\dd q)\frac{\delta}{\delta z(q')} A(q;z) \right) \\
		&   = 1 + \sum_{n=1}^\infty \frac{(-1)^n}{n!}\int_{\mathbb X^n}  \det\left( \left( \frac{\delta}{\delta z(p_i)} A(p_j;z) \right)_{i,j=1, \ldots ,n} \right) z(\dd p_1) \ldots z(\dd p_n) \nonumber
\end{align}
and next we use 
the combinatorial interpretation of the matrix element
\begin{equation}\label{M}
	M_{p,p'}(z) = \frac{\delta}{\delta z(p')} A(p;z),
\end{equation}
which by  Eq.~\eqref{eq:aqdev} takes the form
\be
	M_{p,p'}(z)  = \sum_{k=0}^\infty \frac{1}{k!}\int_{\mathbb X^k}
	 		A_{k}(p;p',\vect x_{[k]})
			  z(\dd x_1)\cdots  z(\dd x_k).		 	 
\ee
%
In $M_{p,p'}(z)$ we recognize the generating function for digraphs $G$ with vertices $\circ, \circ'$, $1,\ldots, k$ that have edges $(\circ, \circ')$ and $(\circ, j)$, $j = 1, \ldots k$. The vertex $\circ$ has color $p$, the vertex $\circ'$ color $p'$, and $j$ the color $x_j$. 

We call such a digraph with the associated coloring a \emph{spike of type $(p,p')$} and the edge $(\circ, \circ')$ the \emph{base edge} of the spike.  
A spike with base edge $(\circ, \circ')$ is identified with the enriched graph $\bar G$ that has the trivial partition consisting of a single block, $ P_{\circ} = \{\circ', \vect x_{[k]} \} $. 
 See the left panel of Figure~\ref{fig:crown} for an example of a spike.
 
Concatenating base edges of spikes in a circular fashion gives rise to an object consisting of a base cycle and additional incoming edges.

\begin{definition} \label{def:crown}
Let $V$ be a finite non-empty set and $\bar f = (f,(P_v)_{v\in V}) \in \mathcal{M}[V]$. We call $\bar f$ a \emph{crown} if there exists a finite set $B\subset V$ (the \emph{base} of the crown)
such that: 
\begin{itemize}
	\item $f(V) = B$,
	\item $f$ restricted to $B$ is a cycle,
	\item for all $r\in B$, $P_r = \{f^{-1}(\{r\})\}$, that is, the partition $P_r$  consists of a single block.
\end{itemize}
Let $\mathcal{K}[V]$ be the collection of crowns on $V$.
\end{definition} 

The right panel in Figure~\ref{fig:crown} shows are crown. The simplest crown is a loop.

\begin{figure}

\newcommand\arrowonline[1]{node 
[pos=#1,sloped,anchor=center,draw=none,fill=none,shape=rectangle,inner 
   sep=0pt,outer sep=0pt] {\tikz\draw[-{Straight Barb[length=1.5mm,sep=10pt,bend]},dash pattern=on 0pt off 2pt] 
(-1pt,0pt) -- (0pt,0pt);}} 
\newcommand\arrowonlinel[1]{node 
[pos=#1,sloped,anchor=center,draw=none,fill=none,shape=rectangle,inner 
   sep=0pt,outer sep=0pt] {\tikz\draw[{Straight Barb[length=1.5mm,sep=10pt,bend]}-,dash pattern=on 0pt off 2pt] 
(-1pt,0pt) -- (0pt,0pt);}} 

\newcommand\markpar{{markings,
    mark=at position 0.5 with {{Straight Barb[length=1.5mm]}}}}

\begin{tikzpicture}[scale=1,every label/.style ={font=\tiny}, 
   vertex/.style={minimum size=5pt,inner sep=0pt, fill,draw,circle,},
   mylabel/.style={label={[transparent]#1}},
   overtex/.style={minimum size=5pt,inner sep=0pt ,draw,circle},
open/.style={fill=none}, 
cross/.style={minimum size=2pt,fill,draw,circle},
sibling distance=1cm,
level distance=1.00cm, 
every fit/.style={ellipse,draw,inner sep=6.5pt}, 
arrow/.tip = {Latex[length=1.5mm,sep=10pt,bend]},
every fit/.style={ellipse,draw,inner sep=6pt},
glines/.style={%
   postaction={decorate,decoration={
       markings, mark=at position #1 with {\arrow{Straight Barb[length=1.5mm]}}
       }
       }
}       
]

\draw[rounded corners=5pt]  (-1,3) rectangle (6,-4);
\draw[rounded corners=5pt]  (6,3) rectangle (13,-4);

\begin{scope}[xshift=2.5cm,rotate=180]
\coordinate (z1) at (0,0);
\tikzmath{\r{ad}=1;}
 
\path 
(180:\r{ad}cm) node (x3)  [overtex,label={south east:$p'$}, label={east:$\! \! '$}] {}
(270:\r{ad}cm) node (x4)  [overtex,label={south:$p$}] {};

\path[name path=x3x4] (x3) to [bend right] (x4);
\draw[glines=0.4] (x3) to [bend right] (x4); 

\path ($(x3)$) arc [start angle=100, radius=1.5cm, end angle=210] 
    node (x32) [pos=0.2,vertex,label={south east:$1$}] {}
    node (x33) [pos=0.4,vertex,label={south east:$2$}] {}
    node (x34) [pos=0.6,vertex,label={east:$3$}] {}
    node (x35) [pos=0.8,vertex,label={east:$4$}] {}
   ;
    
        \path[name path=transs] 
           ($(x4)+0.4*($(x3) - (x4)$)$) arc [start angle=140, radius=2.1cm, end angle=210];

     \path[name path=x32x4] (x32) to [bend right]  (x4);
      \draw[use path=x32x4, glines=0.4] ;
           \path[name path=x33x4] (x33) to [bend right]  (x4);
      \draw[use path=x33x4, glines=0.3] ;

    \path[name path=x35x4] (x35) to [bend right]  (x4);
      \draw[use path=x35x4, glines=0.4] ;
  
     \path[name path=x34x4] (x34) to [bend right]  (x4);
      \draw[use path=x34x4, glines=0.35] ;
       
       \draw[name intersections={of=x3x4 and transs, name=i, total=\t}, name intersections={of=x35x4 and transs, name=j, total=\tr}] 
     ($(i-\t) - 0.2*($(j-\tr)- (i-\t)$)$) to [bend right] ($(j-\tr) - 0.25*($(i-\t)- (j-\tr)$)$);
  
\end{scope}


\begin{scope}[xshift=9.5cm,yshift=0cm,rotate=180]
\coordinate (z1) at (0,0);
\tikzmath{\r{ad}=1;}
 
\path (0:\r{ad}cm) node (x1)  [overtex, mylabel={north:$x_1$}] {}
(90:\r{ad}cm) node (x2)  [overtex, mylabel={north:$x_2$}] {}
(180:\r{ad}cm) node (x3)  [overtex,mylabel={north:$x_3$}] {}
(270:\r{ad}cm) node (x4)  [overtex,mylabel={north:$x_4$}] {};

\path[name path=x3x4] (x3) to [bend right] (x4);
\draw[glines=0.4] (x3) to [bend right] (x4); 
\path[name path=x2x3] (x2) to [bend right] (x3);
      \draw[use path=x2x3, glines=0.4] ;
      \path[name path=x4x1] (x4) to [bend right] (x1);
      \draw[use path=x4x1, glines=0.4] ;
      \path[name path=x1x2] (x1) to [bend right] (x2);
      \draw[use path=x1x2, glines=0.4] ;
\path ($(x2)$) arc [start angle=45, radius=3cm, end angle=135] 
    node (x22) [pos=0.2,vertex,mylabel={north east:$22$}] {}
    node (x23) [pos=0.4,vertex,mylabel={right:$23$}] {}
    ;
     \path[name path=x22x3] (x22) to [bend right] (x3);
      \draw[use path=x22x3, glines=0.4] ;
     \path[name path=x23x3] (x23) to [bend right] (x3);
      \draw[use path=x23x3, glines=0.4] ;
%
     \path[name path=transs] ($(x3)+0.4*($(x2) - (x3)$)$) arc [start angle=65, radius=2cm, end angle=155];
     \draw[name intersections={of=x2x3 and transs, name=i, total=\t}, name intersections={of=x23x3 and transs, name=j, total=\tr}] 
     ($(i-\t) - 0.25*($(j-\tr)- (i-\t)$)$) to [bend right] ($(j-\tr) - 0.25 *($(i-\t)- (j-\tr)$)$);
  \path[name path=transs] ($(x3)+0.4*($(x2) - (x3)$)$) arc [start angle=65, radius=2cm, end angle=155];
      
%
%

\path ($(x3)$) arc [start angle=100, radius=1.5cm, end angle=210] 
    node (x32) [pos=0.2,vertex,mylabel={north east:$32$}] {}
    node (x33) [pos=0.4,vertex,mylabel={right:$33$}] {}
    node (x34) [pos=0.6,vertex,mylabel={south:$34$}] {}
    node (x35) [pos=0.8,vertex,mylabel={south:$35$}] {}
   ;
    
        \path[name path=transs] 
           ($(x4)+0.4*($(x3) - (x4)$)$) arc [start angle=140, radius=2.1cm, end angle=210];

     \path[name path=x32x4] (x32) to [bend right]  (x4);
      \draw[use path=x32x4, glines=0.4] ;
           \path[name path=x33x4] (x33) to [bend right]  (x4);
      \draw[use path=x33x4, glines=0.3] ;

    \path[name path=x35x4] (x35) to [bend right]  (x4);
      \draw[use path=x35x4, glines=0.4] ;
  
     \path[name path=x34x4] (x34) to [bend right]  (x4);
      \draw[use path=x34x4, glines=0.35] ;
       
       \draw[name intersections={of=x3x4 and transs, name=i, total=\t}, name intersections={of=x35x4 and transs, name=j, total=\tr}] 
     ($(i-\t) - 0.2*($(j-\tr)- (i-\t)$)$) to [bend right] ($(j-\tr) - 0.25*($(i-\t)- (j-\tr)$)$);
%
%

\path ($(x4)$) arc [start angle=-135, radius=3cm, end angle=-45] 
    node (x42) [pos=0.2,vertex,mylabel={north east:$42$}] {}
    node (x43) [pos=0.4,vertex,mylabel={right:$43$}] {}
   ;
    
    \path[name path=x1x42] (x42) to [bend right] (x1);
      \draw[use path=x1x42, glines=0.4] ;
\path[name path=x1x43] (x43) to [bend right] (x1);
      \draw[use path=x1x43, glines=0.4] ;

  \path[name path=transs] 
           ($(x1)+0.3*($(x4) - (x1)$)$) arc [start angle=-135, radius=1.0cm, end angle=-2];
      \draw[name intersections={of=x4x1 and transs, name=i, total=\t}, name intersections={of=x1x43 and transs, name=j, total=\tr}] 
 ($(i-\t) - 0.2*($(j-\tr)- (i-\t)$)$) to [bend right] ($(j-\tr) - 0.3*($(i-\t)- (j-\tr)$)$);

%
%
\path ($(x1)$) arc [start angle=-45, radius=3cm, end angle=45] 
    node (x12) [pos=0.2,vertex,mylabel={north east:$12$}] {}
    node (x13) [pos=0.4,vertex,mylabel={right:$13$}] {}
    node (x14) [pos=0.57,vertex,mylabel={south:$14$}] {}
     ;

\path[name path=transs] ($(x2)+0.3*($(x1) - (x2)$)$) arc [start angle=-25, radius=1.5cm, end angle=55];

\path[name path=x12x2] (x12) to [bend right] (x2);
\draw[use path=x12x2, glines=0.4] ;
\path[name path=x13x2] (x13) to [bend right] (x2);
\draw[use path=x13x2, glines=0.4] ;
\path[name path=x14x2] (x14) to [bend right] (x2);
\draw[use path=x14x2, glines=0.4] ;


\draw[name intersections={of=x1x2 and transs, name=i, total=\t}, name intersections={of=x14x2 and transs, name=j, total=\tr}] 
 ($(i-\t) - 0.3*($(j-\tr)- (i-\t)$)$) to [bend right] ($(j-\tr) - 0.3*($(i-\t)- (j-\tr)$)$);

\end{scope}

\end{tikzpicture}

\caption{\label{fig:crown}The left panel shows a spike  with $k=4$. The right panel shows a crown, where the set of white vertices the base $B$ and $V$ (as in Definition~\ref{def:crown}) is the set of all white and black vertices. The partitions $P_r$ are symbolized by the small curved lines crossing the edges. }
\end{figure} 

\noindent The next lemma expresses the relevant coefficient of the determinant as a sum over assemblies of crowns, with an additional factor $(-1)^s$ where $s$ is the number of crowns. The factor $(-1)^s$ is a crucial ingredient to cancellations in the proof of Proposition~\ref{prop:magicformula}.

\begin{lemma}\label{lem:dk}
Let $D_k$ be as in~\eqref{dkdef}. 
Then 
	\be\label{Ds}
	D_{k}(\vect{q}_{[k]}) =\sum_{s=1}^{k} (-1)^s \sum_{ \{V_1,\ldots, V_s\}} \prod_{i=1}^s \left(\sum_{\bar f_i \in \mathcal{K}[V_i]} w(\bar f; \vect{q}_{V_i}) \right)
\ee
where the sum is over set partitions $\{V_1,\ldots,V_s\}$ of $[k]$.
\end{lemma}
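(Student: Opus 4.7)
My strategy is to expand the Fredholm determinant directly via the Leibniz formula, interpret the resulting matrix products combinatorially, and match with the right-hand side of~\eqref{Ds}. First, starting from the Fredholm expansion~\eqref{eq:fredformal}, I would apply the Leibniz formula to each finite $n\times n$ determinant, giving
\be
\det(\mathrm{Id}-\mathbb A_z) = 1 + \sum_{n\geq 1}\frac{(-1)^n}{n!}\int_{\mathbb X^n}\sum_{\sigma\in\mathfrak{S}_n}\mathrm{sgn}(\sigma)\prod_{i=1}^n M_{p_i,p_{\sigma(i)}}(z)\, z(\dd p_1)\cdots z(\dd p_n).
\ee
As observed just after~\eqref{M}, each entry $M_{p,p'}(z)$ is the generating function for spikes of type $(p,p')$; combining the matrix entries along a single cycle $C$ of $\sigma$ with the corresponding integrations $\int\cdot\, z(\dd p_i)$, $i\in C$, yields the generating function for a crown in the sense of Definition~\ref{def:crown} with base labelled by $C$, while distinct cycles of $\sigma$ give decoupled integrals. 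Thus the integrand factorises cycle-by-cycle into a product of crown generating functions.

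Next I would handle the signs. For a permutation $\sigma$ with $r$ cycles one has $\mathrm{sgn}(\sigma)=(-1)^{n-r}$; combined with the prefactor $(-1)^n$ from the Fredholm expansion, the total sign becomes $(-1)^{2n-r}=(-1)^{r}$, which is the source of $(-1)^s$ in~\eqref{Ds}. I would then extract the $k$-th coefficient $D_k(\vect{x}_{[k]})$ by expanding each $M$-entry in powers of $z$ and applying the product rule for formal power series (as in the proof of Lemma~\ref{lem:ek}, via~\eqref{eq:formal-higher-product}). After this expansion, the coefficient is parametrised by an injection $\iota:[n]\hookrightarrow[k]$ labelling the base points $b_i:=\iota(i)$ together with an ordered partition $(Y_1,\dots,Y_n)$ of $[k]\setminus\iota([n])$ into possibly empty groups (the extra children of each spike), with weight $\prod_i A_{|Y_i|+1}(x_{b_i};x_{b_{\sigma(i)}},\vect{x}_{Y_i})$. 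Regrouping this datum cycle-by-cycle, each cycle $C$ of $\sigma$ produces a crown on $V_C:=\iota(C)\cup\bigsqcup_{i\in C}Y_i$, and $\{V_{C_1},\dots,V_{C_r}\}$ is an unordered set partition of $[k]$.

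The final step---and main obstacle---is the combinatorial bookkeeping that justifies the cancellation of the $1/n!$ prefactor. Given a target, namely an unordered partition $\{V_1,\dots,V_s\}$ of $[k]$ together with crowns $\bar f_j\in\mathcal K[V_j]$, the bases $B_j\subset V_j$, the cyclic orderings on the $B_j$, and the attaching maps $V_j\setminus B_j\to B_j$ are all determined by the crown data; the Fredholm configurations mapping to this target are precisely the $n!$ bijections $[n]\to\bigsqcup_j B_j$ labelling the base points (each such bijection fixes $\sigma$ and the groups $Y_i$). Cancelling this $n!$ overcount with the $1/n!$ prefactor of the Fredholm expansion leaves the unordered sum in~\eqref{Ds}. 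The delicate points to verify are (i) the correspondence between ``ordered partitions into possibly empty groups'' on the Fredholm side and ``arbitrary attaching functions $V_j\setminus B_j\to B_j$'' on the crown side, and (ii) the absence of any further over- or undercounting once the base labels in $[n]$ are forgotten.
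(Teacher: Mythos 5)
Your proposal is correct and follows essentially the same route as the paper's proof: Leibniz expansion of the Fredholm determinant, conversion of $(-1)^n\,\mathrm{sgn}(\sigma)$ into $(-1)^{\#\text{cycles}}$, identification of matrix entries with spikes and of cycles with crowns, and cancellation of the $1/n!$ prefactor. The only (cosmetic) difference is in the final bookkeeping: you count the $n!$ bijections from $[n]$ onto the union of the crown bases, whereas the paper merges the $\vect p$- and $\vect x$-variables and cancels a binomial coefficient $\binom{n+m}{m}$ against the factorials; the two "delicate points" you flag do check out exactly as you describe, since a crown is precisely a base, a cyclic order on it, and an attaching function from the non-base vertices to the base.
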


\begin{proof} 
Using the definition as formal power series \eqref{eq:fredformal}, we see that the $m$-th coefficient of the determinant is 
\begin{align}\label{eq:detTK}
& \det \left( \mathrm{Id} 
		- z(\dd q)\frac{\delta}{\delta z(q')} A(q;z) \right)_m\hspace{-0.3cm}(p_1, \ldots , p_m) = (-1)^m  \det\left( \left( \frac{\delta}{\delta z(p_i)} A(p_j;z) \right)_{i,j=1, \ldots ,m} \right) .
\end{align}
The definition of the determinant gives  
\begin{equation}\label{insidedet}
	\det(-M_{p_i p_j})_{1\leq i,j\leq m} = \sum_{\sigma \in \mathfrak{S}_m} \mathrm{sgn}\,(\sigma) \prod_{i=1}^m (- M_{p_{\sigma(i)},p_{i}}),
\end{equation} 
which we may also write as 
\be \label{insidedet2}
		\det(-M_{p_i p_j})_{1\leq i,j\leq m} = \sum_{\sigma \in \mathfrak{S}_m} (-1)^{s(\sigma)} \prod_{i=1}^m  M_{p_{\sigma(i)},p_{i}}
\ee
with $s(\sigma)$ the number of cycles in the cycle decomposition of the permutation $\sigma$. 
Indeed, let  $\sigma=\sigma_1\cdots\sigma_s$ be the cycle decomposition into $s$ cycles
of respective lengths $\ell(\sigma_i)$. Then, $\sum_{i=1}^s \ell(\sigma_i)=m$ and 
\be\label{cs}
	(-1)^m\mathrm{sgn}(\sigma)=(-1)^m\prod_{i=1}^s(-1)^{\ell(\sigma_i)-1}=(-1)^s. 
\ee
The sum in~\eqref{insidedet2} is a sum over the following directed graphs. Each connected component is a cycle (or loop). Each edge $(p_1,p_2)$ contributes a factor $M_{p_1,p_2}$  and the sign is $-1$ to the numbers of  cycles. By the definition \eqref{eq:formal-product} of the product of power series, the $n$-th coefficient of~\eqref{insidedet2} at $(x_1,\ldots, x_n)$ is 
\be \label{cycweight}
	\sum_{\sigma\in \mathfrak S_m} (-1)^{s(\sigma)} \sum_{(J_1,\ldots, J_m)} \prod_{i=1}^m \left(
			A_{\#J_{i}}(p_i;p_{\sigma^{-1}(i)},\vect{x}_{J_{i}}) \right). 
\ee
The sum is over ordered set partitions $(J_1,\ldots, J_m)$ of $[n]$ into $m$ sets. Because of $A_0=0$, only non-empty sets $J_i \neq \varnothing$ contribute.

The expression~\eqref{cycweight} is a sum over permutations $\sigma \in \mathfrak S_m$ and sequences $(G_1,\ldots, G_m)$ of spikes with vertex sets $\{\circ_i \} \cup J_i$ with disjoint $J_i$'s and base edge  $(\circ_{\sigma(i)}, \circ_{i})$. Such a sequence is naturally associated with the  digraph $G$ with vertices $\{1,\ldots, n\} \cup \{ \circ_1 , \ldots , \circ_m\}$.  Each connected component of $G$ is a crown, the base of the crown consists only out of vertices from  $\{ \circ_1 , \ldots , \circ_m\}$ and all vertices not in the base are from $\{1,\ldots, n\}$. The base of the crown corresponds to a cycle in $\sigma$. The mapping $(\sigma, (G_1,\ldots, G_m))\mapsto G$ induces  a one-to-one correspondence between (i) pairs $(\sigma, (G_1,\ldots, G_m))$ and (ii) collections of crowns with the vertices in the bases of the crown are the  vertices 
$\{ \circ_1 , \ldots , \circ_m\}$ and all of them are in some base of a crown.

Hence the $n$-th coefficient of \eqref{eq:detTK} is given by
\[
 \sideset{}{'}\sum_{\bar f\in \mathcal M[m+n]} (-1)^{s(f)} w(\bar f; \vect p_{[m]}\vect x_{[n]}), 
\]
where the primed sum is over enriched endofunctions on $[k]$ whose connected components are crowns. The next step is to drop the distinction betweed $\vect p$ and $\vect x$, which gives rise to a binomial coefficient $\binom{n+m}{m}$ which cancels with the factorial pre-factors, that is $1/m!$ in \eqref{eq:fredformal},  $1/n!$ from \eqref{eq:aqdev} and $k!=(n+m)!$ from \eqref{dkdef}.
The $k$-th coefficient of $\det (\mathrm{id} - M(z))$ is therefore equal to 

\be \label{eq:det4}
	D_k(\vect x_{[k]}) =\sum_{m=1}^k  \frac{k!}{(k-m)! m!}   \sideset{}{'}\sum_{\bar f\in \mathcal M[k]} (-1)^{s(f)} w(\bar f; \vect x)  
\ee
\end{proof}

\begin{proof} [Proof of Proposition~\ref{prop:magicformula}]
	Let $E_k^n$ and $D_k$ be the coefficients of the exponential and determinant as in~\eqref{ekdef} and~\eqref{dkdef}. 
	By~\eqref{eq:formal-product}, the power series in braces in Proposition~\ref{prop:magicformula} has as $k$-th coefficient
\be
	\sum_{I\subset [k]}  E^n_{\# I}(\vect{q};\vect x_I) D_{\# [k]\setminus I} (\vect x_{[k]\setminus I}).
\ee
In view of~\eqref{fvardev}, we have to show, in order to prove Proposition~\ref{prop:magicformula}, that 
\be \label{magickey}
	\sum_{I\subset [n]}  E^n_{\# I}(\vect{q};\vect q_I) D_{\#[n]\setminus I} (\vect q_{[n]\setminus I})= 0.
\ee
Note that $\vect x_I$ is replaced by $\vect q_I$ and $k$ has to be equal to $n$. Hence the list $\vect q_I$ is a part of $\vect q$.
	
	By Lemmas~\ref{lem:ek} and~\ref{lem:dk}, the left-hand side of~\eqref{magickey} is given by 
	\be
		\sum_{I\subset [n]} \sum_{r=1}^{n-\#I}  \sum_{\substack{\{V_1,\ldots,V_r\} \\ \text{part. of }[n]\setminus I}} \sum_{\bar f\in \mathcal{M}^{[n]\setminus I}(I)} \sum_{\substack{(\bar g_1,\ldots,\bar g_r)\\ \in \mathcal{K}(V_1)\times \cdots \times\mathcal{K}(V_r)}} w(\bar f;\vect{q}_I) \prod_{k=1}^r \bigl( - w(\bar g_k;\vect{q}_{V_k})\bigr).
	\ee
	In words: a sum over collections of crowns $\bar g_1,\ldots,\bar g_r$ with disjoint supports $V_1,\ldots, V_r$ and a forest $\bar f$ of cycle-rooted trees with root cycle in $I\subset[n]\setminus \cup_k V_k$ and vertex-rooted trees  with root vertices
	in $\cup V_k$.  
	Each such tuple can be mapped to a map $\bar F\in \mathcal{M}[n]$. The relation between weights is 
	\be
		w(\bar F;\vect{q})=w(\bar f;\vect{q}_I) \prod_{k=1}^r w(\bar g_k;\vect{q}_{V_k}). 
	\ee
	The mapping $(\bar f,\bar g_1,\ldots, \bar g_r)\mapsto \bar F$ is surjective but not injective: each $\bar F\in\mathcal{M}[n]$ is a forest of cycle-rooted trees (no vertex-rooted tree). Each such cycle-rooted tree can decide whether (a) it belongs to the forest $\bar f$, or (b) it is split into a crown $\bar g$ and attached vertex-rooted trees. 
Note that the set of splitting points is uniquely determined. The two choices for a given $\bar F$ come
with opposite signs and sum to zero.
\end{proof}

\subsection{Forests with several sinks. Proof of Theorem~\ref{thm:lagrange-good}}


The proof of Theorem~\ref{thm:lagrange-good} builds on the cancellations from Proposition~\ref{prop:magicformula}. The directed graphs in the proof of Proposition~\ref{prop:magicformula} consists of crowns and vertex-rooted trees which can be combined in such a way that all connected components are cycle-rooted trees. In contrast, for Theorem~\ref{thm:lagrange-good} the graphs  consist of crowns and two types of vertex-rooted trees. The first  type is as in Proposition~\ref{prop:magicformula}, the second one is related to the coefficients of the power series of $\Phi$. As in Proposition~\ref{prop:magicformula} the cycle-rooted trees cancel exactly with the first type of vertex-rooted trees and only the second type of trees survives, which is a combinatorical justification that the inversion is related to vertex-rooted trees only.

Let us introduce a shorthand for a forest of trees
\be \label{eq:forestsdef}
		 F_{\#L}^n(\vect q_{[n]};\vect q_L) 				:= \sum_{(V_\ell)_{\ell \in L}} \prod_{\ell\in L} t_{\#V_\ell} \bigl( q_\ell; (q_j)_{j\in V_\ell}\bigr)
	\ee
	where is the sum is over ordered partitions $(V_\ell)_{\ell \in L}$ of $[n]\setminus L$, with $V_\ell = \varnothing$ allowed.

\begin{lemma}\label{lem:cancel}
Let $n\in \N$ and $(q_1,\ldots, q_n) \in \mathbb X^n$. Fix $L\subset [n]$ and consider sums over pairs $I,J \subset[n]$ such that $[n]$ is the disjoint union of $L$, $I$, and $J$. Then 
	\be \label{eq:magicforests}
		 E_{\#I}^n(\vect q_{L\cup J \cup I};\vect q_I) = \sum_{(I_1,I_2)} F_{\#L}^{\#L + \#I_1}(\vect q_{L};\vect q_{I_1}) E_{\#I_2}^{\#J + \#I_2}(\vect q_{J \cup I_2};\vect q_{I_2}),	
	\ee 
where $I_1$, $I_2$ are disjoint subsets of $I$ such that $I_1 \cup I_2 =I$ (the sets $I_1, I_2$ could also be empty).	
\end{lemma}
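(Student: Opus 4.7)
The strategy is to interpret both sides combinatorially via Lemma~\ref{lem:ek} and then exhibit a weight-preserving bijection. By Lemma~\ref{lem:ek}, the left-hand side equals $\sum_{\bar f \in \mathcal{M}^{L \cup J}(I)} w(\bar f; \vect q_{[n]})$. Each such $\bar f$ corresponds to a functional digraph whose sinks are exactly $L \cup J$; every connected component is either a vertex-rooted tree (containing a unique sink in $L \cup J$) or a cycle-rooted tree lying entirely inside $I$. I will decompose the sum by sorting connected components according to where their root (sink or cycle) lies.

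For a given $\bar f$, define $I_1 \subseteq I$ to consist of those $i$ whose connected component is a vertex-rooted tree with sink in $L$, and let $I_2 := I \setminus I_1$ collect the remaining vertices (those in trees rooted in $J$ and those in cycle-rooted components). Restricting $\bar f$ to $I_1 \cup L$ yields an enriched partial endofunction from $I_1$ to $I_1 \cup L$ whose components are vertex-rooted trees rooted in $L$; restricting to $I_2 \cup J$ yields an arbitrary element of $\mathcal{M}^J(I_2)$. Because the two parts share no edges and each fiber $f^{-1}(v)$ is contained in the connected component of $v$, the fibers and their partitions $P_v$ respect the decomposition, and the weight~\eqref{mapweights} factorizes as the product of the two restricted weights. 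Conversely, every triple $(I_1, I_2, \bar f_1, \bar f_2)$ with $I = I_1 \sqcup I_2$, $\bar f_1$ an $L$-rooted forest on $I_1 \cup L$, and $\bar f_2 \in \mathcal{M}^J(I_2)$, reassembles uniquely into a single $\bar f \in \mathcal{M}^{L\cup J}(I)$.

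To identify the two factors: for the $L$-rooted piece, grouping by the vertex set $V_\ell \subseteq I_1$ of the tree rooted at each $\ell \in L$ yields an ordered set partition $(V_\ell)_{\ell \in L}$ of $I_1$ (allowing $V_\ell = \varnothing$). Each individual tree contributes $t_{\# V_\ell}(q_\ell; (q_j)_{j \in V_\ell})$ by definition~\eqref{tnn}, and summing over such partitions produces $F^{\#L + \#I_1}_{\#L}(\vect q_L; \vect q_{I_1})$ by~\eqref{eq:forestsdef}. The second factor is $E^{\#J + \#I_2}_{\#I_2}(\vect q_{J \cup I_2}; \vect q_{I_2})$ by a second application of Lemma~\ref{lem:ek} with sink set $J$ and domain $I_2$. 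Summing over all disjoint decompositions $(I_1, I_2)$ of $I$ yields~\eqref{eq:magicforests}.

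The one subtlety is to verify that the assignment $I = I_1 \sqcup I_2$ places \emph{every} cycle-rooted component into $I_2$, so that the $L$-part is genuinely cycle-free and therefore correctly enumerated by $F^{\#L + \#I_1}_{\#L}$ (which, through $t$, counts only vertex-rooted trees). This is automatic in any functional digraph: a cycle-rooted component contains no sink, so none of its vertices can be routed to $L$, and they must all land in $I_2$. This is the only place where the distinction between $F$ and $E$ matters, and it makes the bijection well-defined.
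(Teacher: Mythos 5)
Your proof is correct and follows essentially the same route as the paper's: apply Lemma~\ref{lem:ek} to write the left-hand side as a sum over $\mathcal M^{L\cup J}[I]$, then split each enriched map by sorting connected components into those that are vertex-rooted trees with sink in $L$ (giving $I_1$ and the forest factor $F$) and everything else (giving $I_2$ and, via a second application of Lemma~\ref{lem:ek}, the factor $E^{\#J+\#I_2}_{\#I_2}$). If anything, you are slightly more careful than the paper's one-line argument in noting explicitly that cycle-rooted components contain no sink and therefore necessarily land in $I_2$, and that the weight~\eqref{mapweights} factorizes over components because each fiber lies within a single component.
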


\begin{proof} 
By Lemma~\ref{lem:ek} we have that
\be
E_{\#I}^n(\vect q_{L\cup J \cup I};\vect q_I) = \sum_{\bar f\in  \mathcal{M}^{L \cup J }(I)} w(\bar f;q_1,\ldots, q_n).
\ee
Define $I_1$ as the union of all the vertices in $I$  from connected components which are vertex rooted trees with root vertex in $L$. Then $I_2$ is the set of all vertices in $I$ associated to root vertices in $J$. This correspondence is one to one.
\end{proof}

\begin{proof}[Proof of Theorem~\ref{thm:lagrange-good}]
	By the product formula~\eqref{eq:formal-higher-product}, the $n$-th coefficient of 
	\be
		\Phi(z) \exp\Bigl( \sum_{i=1}^n A(q_i;z)\Bigr)\det \left( \mathrm{Id} 
		- z(\dd q)\frac{\delta}{\delta z(q')} A(q;z) \right)
	\ee
	at $(q_1,\ldots, q_n)$ is 
	\be \label{eq:magic3}
		\sum_{(L,I,J)} \Phi_{\#L}(\vect q_L)  E_{\#I}^n(\vect q_{[n]}; \vect q_I) D_{\#J}(Q;\vect q_J),
	\ee
	where the sum is over ordered partitions $L,I,J$ of $[n]$ with empty sets allowed. By the definition~\eqref{eq:zetadef} of $\zeta(\dd q;\nu)$ and the definition~\eqref{fcomp2} of the composition, the $n$-th coefficient of $\Psi(\nu) = \Phi(\zeta (\nu))$ is 
	\be \label{eq:magic4}
		\sum_{L\subset [n]} \Phi_{\#L}(\vect q_L) \sum_{(V_\ell)_{\ell \in L}} \prod_{\ell\in L} t_{n_\ell -1} \bigl( q_\ell; (q_j)_{j\in V_\ell}\bigr)=
		\sum_{L\subset [n]} \Phi_{\#L}(\vect q_L)
		F_{\#L}^n(\vect q_{[n]};\vect q_L) ,
	\ee
	where the sum is over ordered partitions $(V_\ell)_{\ell \in L}$ of $[n]\setminus L$, with $V_\ell =\varnothing$ allowed. So we have to show that the expressions~\eqref{eq:magic3} and~\eqref{eq:magic4} are equal. Indeed, by Lemma~\ref{lem:cancel} we have that 
\begin{align*}
& \sum_{(L,I,J)} \Phi_{\#L}(\vect q_L)  E_{\#I}^n(\vect q_{[n]}; \vect q_I) D_{\#J}(\vect q_J)\\
 & = \sum_{(L,I,J)} \Phi_{\#L}(\vect q_L)  \sum_{(I_1, I_2) \, : \, I_1 \cup I_2 = I}F_{\#L}^{\# L + \#I_2}(\vect q_{L\cup I_2};\vect q_{I_2}) E_{\#I_1}^{\#I_1 + \#J}(\vect q_{ I_1 \cup J }; \vect q_{I_1}) D_{\#J}(\vect q_J) .
 \end{align*}
 Using that $L, J, I_1, I_2$ are disjoint with union $[n]$, we obtain that
\begin{align*}
 & = \sum_{(L,I_2)} \Phi_{\#L}(\vect q_L)F_{\#L}^n(\vect q_{L\cup I_2};\vect q_{I_2})  
\sum_{(I_1,J)}  E_{\#I_1}^{\#J + \#i_1}(\vect q_{ I_1 \cup J }; \vect q_{I_1})  D_{\#J}(\vect q_J).
\end{align*}
Using the cancelations from Proposition~\ref{prop:magicformula} (most easily in the form \eqref{magickey}) we get that 
\[
= \sum_{(L,I_2)} \Phi_{\#L}(\vect q_L)F_{\#L}^n(\vect q_{L\cup I_2};\vect q_{I_2}),  
\]
where $L$ and $I_2$ are disjoint with union $[n]$. This is exactly  \eqref{eq:magic4}.
\end{proof} 

\section{Discussion} \label{sec:discussion}

In this final section we discuss connections to similar methods in different contexts. Some occurrences of trees in various areas of mathematics are listed in Table~\ref{table:trees}.

\begin{table}
	\begin{tabular}{ll}
		\hline \hline
		Algebra & Bass-Connell-Wright tree formula for reverse power series \cite{bass-connell-wright1982}\\
		Numerics & Butcher trees~\cite{butcher1972}\\
		Dynamical systems & Lindstedt series in KAM theory\\
		Analysis & Fa{\`a} di Bruno formula for derivatives of $f_1\circ f_2\circ\cdots \circ f_m$ \cite{johnson-prochno2019}\\
		Gaussian fields, QFT & Perturbative expansion of one-point correlation functions\\
		QFT, renormalization & Gallavotti-Niccol{\`o} trees, Hepp sectors \\
		Combinatorics	 &  Implicit species \cite[Theorems 3.1.2 and~3.2.1]{bergeron-labelle-leroux1998book}\\
		Probability theory & Branching processes \cite{good1960, good1965}\\
			&  Random trees and random forests \cite[Chapter 6]{pitman2002book} \\
		\hline\hline 
		\end{tabular} 
		\label{table:trees}

		\begin{caption}
					{Some occurrences of trees in different areas of mathematics. The abbreviations KAM and  QFT stand Kolmogorov-Arnold-Moser and quantum field theory, respectively.}
		\end{caption}
\end{table}

\subsection{Gallavotti trees}

For simplicity let us consider a gas consisting of classical particles interacting via a two-body potential $V$ at inverse temperature $\beta$. From basic statistical mechanics, we can write the density $\rho$ as a function of the activity $z$ as follows:
\begin{equation}\label{z_rho}
	\rho(z)= z+\sum_{n\geq 2} n b_n z^n,
\end{equation}
where $b_n$ are the Mayer coefficients related to the pair potential $V$.
We observe that one can invert the above expression following different strategies; we present a few.
From \eqref{z_rho}, solving for $z$ we have:
\begin{equation}\label{inv1}
	z=\rho(z)-\sum_{n\geq 2} n b_n z^n
\end{equation}
By iterating over $z$ and expanding the powers of the sums one obtains terms either with $\rho(z)$
or with $z$ in which case we keep expanding.
It is easy to visualize this procedure: each iteration is a branching of a tree and each vertex of the children either has a $\rho(z)$ (and we stop) or it has a $z$ and we continue to the next generation.
Hence, overall this expansion can be viewed as a power series ,in $\rho(z)$ with the power representing the number of final points.
This construction can be also found in 
in the Main Theorem~3 in \cite{wright1989treeformula} and in \cite{bass-connell-wright1982}, see also the survey \cite{wright2000}.
We note that this method of inverting power series has been already used in statistical mechanics,
it is actually reminiscent of Gallavotti's approach to express the Lindstedt perturbation series in the context of KAM theory \cite{gallavotti2012perturbation}.


We observe that in the above example trees are generated by iterating 
the mapping $z\mapsto \rho-\sum_{n\geq 2} n b_n z^n$ and they provide a power series
representation of the fixed point solution of the mapping.
We notice that other mappings can be suggested and, as it is usually the case, they 
correspond to more or less efficient methods. More precisely, some alternative mappings are
\begin{equation}\label{map2}
z\mapsto\frac{\rho}{1-\sum_{n\geq 2} n b_n z^n}
\end{equation}
and
\begin{equation}\label{map3}
z\mapsto\rho\e^{A(z)},\qquad\text{where}\quad A(z)=\sum_{n\geq 1}\frac{a_n}{n!}z^n,
\end{equation}
for some coefficients $a_n$. In \cite{jansen-kuna-tsagkaro2019virialinversion} we demonstrated, cf.  
Theorem 4.1, that \eqref{map3} has a  much better radius of convergence at least in some regimes.

\subsection{Abdesselam's approach}
In \cite{abdesselam2003physicistproof}, A. Abdesselam presents an alternative proof of Lagrange-Good multivariable inversion formula using a quantum field theory (QFT) model. Details are given in the companion papers \cite{abdesselam2003feynman} and \cite{abdesselam2003jacobian}.
This reveals an interesting connection between QFT calculations and Gessel's combinatorial proof \cite{gessel1987}, and seems to show a similar kind of cancellations. One ingredient of Abdesselam's proof is a representation of the one-point correlation function of some complex bosonic field as a sum over trees, which connects to the representation of the inverse in terms of trees. Another ingredient is a graphical representation of a calculus of formal power series, coming with an algebraic formalization of Feynman diagrams \cite{abdesselam2003feynman}.  

\subsection{Other connections}
Tree expansions is a favourite topic in several areas of mathematics. Without pretending of being exhaustive, we note Butcher series in computing higher order Runge-Kutta methods  \cite{butcher1972,faris2019butchertrees}
and the
combinatorial structure in indexing Hepp sectors in renormalization and regularity structures 
\cite{hairer2018bphz}. 
A common feature is that they provide a power series representation of the solution of a fixed point problem and as such 
we believe expansion methods of this type can be widely used in applications. The techniques developed in this paper can be used to extend these expansion in an infinite dimensional context, for example in an inhomogeneous situation as in this paper.

\appendix

\section{Formal power series} \label{ApFormal}

Here we describe some operations on  formal power series as introduced in Definition~\ref{def:fps}, e.g., \be \label{eq:formal}
	K(z) = K_0 + \sum_{n=1}^\infty \frac{1}{n!} \int_{\mathbb X^n} K_n(x_1,\ldots,x_n)  z(\dd x_1)\cdots  z(\dd x_n)
\ee
where $(\mathbb X,\mathcal{X})$ is a measurable space $z$ is a measure on $(\mathbb X,\mathcal X)$, and $K_0\in \C$ is a scalar, and $K_n:\mathbb X ^n \to \C$ are measurable maps that are invariant under permutation of the arguments. Operations are defined purely in terms of the sequence of coefficients.

\noindent \emph{Product}.
Let $K,G$ be formal power series, then $KG$ is defined by 
\be \label{eq:formal-product}
	(KG)_n (x_1,\ldots,x_n):=\sum_{\ell=0}^n \sum_{J\subset [n], \#J=\ell}  K_\ell\bigl( (x_j)_{j\in J} \bigr)G_{n-\ell}\bigl( (x_j)_{j\in [n]\setminus J}\bigr).
\ee
For a motivation of this definition, see e.g.~\cite[Appendix A]{jansen-kuna-tsagkaro2019virialinversion}.
The empty set $J=\varnothing$ is explicitly allowed. As an operation on sequences of symmetric functions, this is exactly the convolution in~\cite[Chapter 4.4]{ruelle1969book}.
 It is not difficult to check that the product is commutative and associative. 
  Eq.~\eqref{eq:formal-product} generalizes to products $K^{(1)}\cdots K^{(r)}$ as 
\be \label{eq:formal-higher-product}
	\bigl(K^{(1)}\cdots K^{(r)}\bigr)_n(x_1,\ldots,x_r)
		= \sum_{(V_1,\ldots, V_r)} \prod_{\ell =1}^r K^{(\ell)}_{\#V_\ell}\bigl( (x_j)_{j\in V_\ell}\bigr)
\ee
where the sum runs over ordered partitions $(V_1,\ldots, V_r)$ of $[n]$ into $r$ disjoint parts, with $V_i=\varnothing$ explicitly allowed.  \\

\noindent {\emph{Variational derivative}}. For $q\in \mathbb X$ and $K$ a formal power series over $\mathbb X$, we define 
\be \label{fvardev}
	\Bigl(\frac{\delta }{\delta z(q)} K\Bigr)_n (x_1,\ldots,x_n) \equiv 	\Bigl(\frac{\delta K}{\delta z}\Bigr)_n (q;x_1,\ldots,x_n)= K_{n+1}(q,x_1,\ldots,x_n). 
\ee
For higher order variational derivatives, see Definition~\ref{def:varderivative}. 
\\

\noindent \emph{Integrals. Measures $K(\dd q;z) = F(q;z) z(\dd q)$}.  Let $F(q;z)$ be a function-valued power series. The power series 
\be
	I(z)= \int_\mathbb X F(q;z) z(\dd q)
\ee
is defined as the power series with coefficients $I_0:=0$ and 
\be \label{eq:formal-integral}
	I_n(x_1,\ldots, x_n):= \sum_{r=1}^n F_n\bigl(x_r; (x_j)_{j\in [n]\setminus \{r\}}\bigr). 
\ee
The definition is motivated by the following formal computation: 
\begin{align*}
	\int_{\mathbb X} F(q;z) z(\dd q) & = \sum_{m=0}^\infty \frac{1}{m!}\int_{\mathbb X^{m+1}} F_m\bigl( q;\vect x_{[m]}\bigr) z^m(\dd \vect x) z(\dd q) \\
	& =  \sum_{m=0}^\infty \frac{1}{m!}\int_{\mathbb X^{m+1}} F_m\bigl( x_{m+1};\vect x_{[m]}\bigr) z^{m+1}(\dd \vect x)  \\
	& =  \sum_{m=0}^\infty \frac{1}{m!}\int_{\mathbb X^{m+1}} \frac{1}{m+1} \sum_{r=1}^ {m+1}F_m\bigl( x_{r};\vect x_{[m+1]\setminus \{r\}}\bigr) z^{m+1}(\dd \vect x) \\
	& =  \sum_{n=1}^\infty \frac{1}{n!}\int_{\mathbb X^{n}} \sum_{r=1}^{n}F_n\bigl( x_{r};\vect x_{[n]\setminus \{r\}}\bigr) z^{n}(\dd \vect x).	
\end{align*} 
We also define the measure-valued formal power series $K(\dd q;z) = z(\dd q) F(q;z)$ as the power series with coefficients $K_0:=0$ and 
\be \label{eq:formal-radon}
	K_n(\mathrm d q;x_1,\ldots, x_n):= \sum_{r=1}^n \delta_{x_r}(\mathrm d q) F_n\bigl(x_r; (x_j)_{j\in [n]\setminus \{r\}}\bigr). \\
\ee

%

\noindent \emph{Composition I and exponential series.}
Let $F(t) = \sum_{n=0}^\infty f_n t^n/n!$ be a formal power series in a single variable $t$ and $K$ a formal power series on $(\mathbb X,\mathcal X)$ with $K_0=0$. The formal power series $F\circ K$ on $\mathbb X$ is defined by $(F\circ K)_0:=f_0$ and for $n\geq 1$, 
\be \label{composition2} 
	(F\circ K)_n (x_1,\ldots,x_n) := \sum_{m=1}^n\sum_{\{J_1,\ldots,J_m\}\in \mathcal P_n} f_{m} \prod_{\ell=1}^m  K_{\#J_\ell}\bigl( (x_j)_{j \in J_\ell}\bigr)
\ee
with $\mathcal P_n$ the collection of set partitions of $\{1,\ldots,n\}$. Note that only because $K_0=0$ the expression \eqref{composition2} is well-defined as a formal power series, because only in this case the sum is finite. An important special case is $F(t) = \exp(t)$, for which Eq.~\eqref{composition2} becomes 
\be \label{eq:formal-exponential} 
	(\exp(K))_n (x_1,\ldots,x_n) = \sum_{m=1}^n\sum_{\{J_1,\ldots,J_m\}\in \mathcal P_n}  \prod_{\ell=1}^m  K_{\#J_\ell}\bigl( (x_j)_{j \in J_\ell}\bigr),
\ee
which is exactly the exponential on the algebra of symmetric functions from~\cite[Chapter 4.4]{ruelle1969book}.  For a motivation of this definition, see again e.g.~\cite[Appendix A]{jansen-kuna-tsagkaro2019virialinversion}.
\\

\noindent \emph{Composition II}.  In order to define the compositions in ~\eqref{eq:inverse}, we need a more general type of composition.  Let $G$ be a formal power series on $\mathbb X$  and $F(q;z)$ a function-valued power series 
$$
	F(q;z) =F_0(q)+ \sum_{n=1}^\infty \frac{1}{n!}\int_{\mathbb X^n} F_n(q;x_1,\ldots,x_n) z(\dd x_1) \cdots z(\dd x_n). 
$$
Let $K(\dd q;z)$ be the measure-valued formal power series $K(\dd q;z) = F(q;z)z(\dd q)$ with coefficients~\eqref{eq:formal-radon}. The composition $H(z):= (G\circ K)(z)$ is defined as the power series with coefficients $H_0=0$ and for $n\geq 1$, 
\be \label{fcomp2}
	H_n(x_1,\ldots, x_n)  = \sum_{m=1}^n \sum_{\substack{J\subset [n]  \\ \#J =m}}  G_{m}\bigl((x_j)_{j\in J}\bigr) \sum_{\substack{(V_j)_{j\in J}:\\  \dot \cup_{j\in J} V_j  = [n]\setminus J}} \prod_{j\in J} F_{\# V_j}\bigl(x_j; (x_v)_{v\in V_j}\bigr).
\ee
The summation is over partitions $(V_j)_{j\in J}$ of $[n]\setminus J$ with empty sets $V_j = \varnothing$ allowed. Note that the sum is only over finitely many summands and hence well-defined as a formal power series. The definition is motivated by the following formal computation: 
\begin{align*}
	H(z) & = G_0 + \sum_{m=1}^\infty\frac{1}{m!} \int_{\mathbb X^m} G_m(x_1,\ldots,x_m) F(x_1;z)\cdots F(x_m;z) z(\dd x_1)\cdots z(\dd x_m) \\
	& = G_0 + \sum_{m=1}^\infty \frac{1}{m!} \int_{\mathbb X^m}G_m(x_1,\ldots,x_m) \prod_{j=1}^m \left( \sum_{n_j=0}^\infty \frac{1}{n_j!} \int_{\mathbb X^{n_j}} F_{n_j}(x_j;\vect y) z^{n_j}(\dd \vect y)\right)  z^m(\dd \vect x).
\end{align*} 
We group terms with the same sum $n= m+ n_1+\cdots + n_m$, write 
\be
	\frac{1}{m!}\prod_{j=1}^m \frac{1}{n_j!} = \frac{1}{n!} \binom{n}{m,n_1,\ldots,n_m} 
\ee
and note that the multinomial counts the number of ways to partition $[n]$ into sets $J$ and $(V_j)_{j\in J}$ in such a way that $\#J = m$ and $\#V_j = n_j$. Exploiting the symmetry of the integrands, we arrive at 
\be
	H(z)  = \sum_{n=1}^\infty \frac{1}{n!}\int_{\mathbb X^n} H_n(x_1,\ldots, x_n) z^n(\dd \vect x)
\ee
with $H_n$ defined in~\eqref{fcomp2}. \\

\noindent \emph{Composition of two measure-valued series.} The composition~\eqref{fcomp2} easily extends to the composition of a measure-valued formal power series  $\Gamma$ with a formal power series  of the form $K(\dd q;z) = F(q;z) z(\dd q)$, for which we have defined the coefficients in  Eq.~\eqref{eq:formal-radon}. 

Then $H= \Gamma \circ K$ is the measure-valued series which has by Eq.~\eqref{fcomp2} the coefficients for $n\geq 1$ and $B\subset \mathbb X$, 
\be \label{eq:fmc}
	H_n(B;x_1,\ldots, x_n):=  \sum_{\substack{J\subset [n]:\\ J\neq \varnothing}} \Gamma_{\# J}(B; \vect x_J)  \sum_{\substack{(V_j)_{j\in J}:\\  \dot \cup_{j\in J} V_j  = [n]\setminus J}} \prod_{j\in J} F_{\# V_j}\bigl(x_j; (x_v)_{v\in V_j}\bigr).
\ee
In treating the the compositions $\rho \circ \zeta$ and $\zeta \circ \rho$ in Eq.~\eqref{eq:inverse}, we need to consider $\Gamma$ which are of the form $
\Gamma(\dd q;z) = G(q;z) z(\dd q)$ for some function valued formal power series $G$, that is by Eq.~\eqref{eq:formal-radon} again
\be
	\Gamma_0(\dd q)= 0 ,\quad \Gamma_n(B;x_1,\ldots, x_n) = \sum_{r=1}^n \1_B(x_r) G_{n-1}\bigl(x_r;(x_j)_{j \neq r}\bigr).
\ee
Then also $H$ is of the form $H(\dd q;z) = J(q;z) z(\dd q)$, where $J$ is a function valued formal power series and 
as above, 
\be
	H_n(B;x_1,\ldots, x_n) = \sum_{r=1}^n \1_B(x_r) J_{n-1}\bigl(x_r; (x_j)_{j\neq r}\bigr)
\ee
with $ H_0 =0$. Then using Eq.~\eqref{eq:formal-radon} we can directly express 
\be
	J_n(q;y_1,\ldots, y_n) = \sum_{\substack{J\subset [n]:\\ J\neq \varnothing}} G_{\# J}(q; \vect y_J)  \sum_{\substack{(V_j)_{j\in J}:\\  \dot \cup_{j\in J} V_j  = [n]\setminus J}} \prod_{j\in J} F_{\# V_j}\bigl(y_j; \vect y_{V_j}\bigr) .
\ee
This also implies that 
\be \label{eq:mcomm}
	H(\dd q; z) = (\Gamma \circ K)(\dd q;z) = (G\circ K)(q;z) z(\dd q).
\ee

\noindent \emph{Restriction of power series} Let $\mathbb{Y} \subset \mathbb{X}$, then one can define a restriction of the formal power series to the set $\mathbb{Y}$,  denoted by $K\!\!\upharpoonright_{ \mathbb{Y}}$, in the following way: restrict the coefficients of $K$ to $\mathbb{Y}$, that is, consider 
\[
K_n\!\!\upharpoonright_{ \mathbb{Y}} : \mathbb{Y}^n \rightarrow \mathbb{C}, \quad (x_1, \ldots , x_n) \mapsto K_n(x_1, \ldots , x_n).
\]
One can consider $K_n\upharpoonright_{ \mathbb{Y}} $ also as a function on $\mathbb{X}^n$ by defining $K_n\upharpoonright_{ \mathbb{Y}} =0$ outside of
$\mathbb{Y}^n$ and hence we may formally write
\begin{multline} 
	K(z)\!\!\upharpoonright_{ \mathbb{Y}}  = K_0 + \sum_{n=1}^\infty \frac{1}{n!} \int_{\mathbb X^n} K_n(x_1,\ldots,x_n) \prod_{i=1}^n \1_{\mathbb Y}(x_i) z(\dd x_1)\cdots  z(\dd x_n)\\ 
	=: K_0 + \sum_{n=1}^\infty \frac{1}{n!} \int_{\mathbb Y^n} K_n(x_1,\ldots,x_n)  z(\dd x_1)\cdots  z(\dd x_n) .
\end{multline}
In case that the formal power series is an actual convergent power series, the above  restriction corresponds to restricting the function $ z \mapsto K(z)$ to all measures which are zero outside of $\mathbb{Y}$. In particular, when $\mathbb{Y}$ contains only finitely many elements, then any measure zero outside of $\mathbb{Y}$ is of the form $z(\dd x) = \sum_{y \in \mathbb{Y}} z_y\delta_{y}(\dd x)$ for some $ z_y \in \mathbb{R}_+$ and hence, in this case,  $K\upharpoonright_{ \mathbb{Y}} $ can be seen as a function on $\mathbb{C}^{\# \mathbb{Y}}$. The analogous construction works also for  $\mathbb{Y}$ with countable many elements, but not for uncountable many elements.

In order to compute the $n$-th. coefficient of $K$ evaluated at $q_1, \ldots , q_n \in \mathbb{X}$, that is $K_n(q_1,\ldots,q_n)$, it is sufficient to consider $K\upharpoonright_{ Q}$, where $Q=\{q_i:\, i=1,\ldots,n\}$ is the set of colors in $q_1, \ldots , q_n$. Notice that the set of colors $Q$ has cardinality smaller than $n$ if colors are repeated in the vector, i.e., $q_i = q_j$ for some $i\neq j$.  
The relation between the coefficients is the following 
\[
\frac{\delta^n}{\delta z(q_1)\cdots \delta z(q_n)} K(z) =  [\vect z^{\vect n}] K\upharpoonright_{ Q}(\sum_{q \in Q} z_q \delta_q),
\]
where $\vect z := ( z_q)_{q\in Q}$, $\vect n := (n_q)_{q \in Q}$, and $n_q:= \# \{ i \in \{ 1, \ldots , n \} \, : \, q_i =q \}$, that is the number of repetitions of the color $q_i$.

This allows us to reduce the computation of the Fredholm determinant, which appears in Theorem~\ref{thm:lagrange-good}, to the computation of  usual determinants
\begin{lemma} \label{lem:smalldet}
	Let $n\in \N$ and $(q_1,\ldots,q_n)\in \mathbb X^n$. Set $Q=\{q_i:\, i=1,\ldots,n\}$. Then the 
	$n$-th coefficient of the Fredholm determinant $\det (\mathrm{Id}- \mathbb K_z)$, evaluated at $(q_1,\ldots,q_n)$, is equal to the $n$-th coefficient at $(q_1,\ldots,q_n)$ of the $(\#Q)\times (\#Q)$-matrix 
\be \label{eq:smalldeterminantb} 
	z \mapsto \mathrm{det}\left(\Bigl( \delta_{q,q'} - z(\{q\}) K_z(q',q)\Bigr)_{q,q'\in Q}\right).
\ee
\end{lemma}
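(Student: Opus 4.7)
The plan is to invoke the restriction discussion preceding the lemma: the $n$-th coefficient at $(q_1,\ldots,q_n)$ of any formal power series on $\mathbb X$ depends only on its restriction to $Q$, and equals $[\vect z^{\vect n}]K\!\!\upharpoonright_Q\!\bigl(\sum_{q\in Q}z_q\delta_q\bigr)$, where $\vect n=(n_q)_{q\in Q}$ records multiplicities. It therefore suffices to prove, as an identity of formal power series in the finite family of variables $(z_q)_{q\in Q}$, that
\[
\det(\mathrm{Id}-\mathbb K_z)\!\!\upharpoonright_Q\;=\;\det\bigl((\delta_{q,q'}-z_q\,K_z(q',q))_{q,q'\in Q}\bigr).
\]

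First, I would rewrite the left-hand side. Restricting each integral in the Fredholm expansion~\eqref{eq:fredformal1aa} to $Q$ and reading $z$ as the discrete measure $\sum_{q\in Q}z_q\delta_q$, the $k$-th term becomes a sum over ordered tuples $(p_1,\ldots,p_k)\in Q^k$ weighted by $\prod_i z_{p_i}$. Tuples with a repeated entry contribute zero because two rows of the inner $k\times k$ matrix coincide; among the remaining tuples, simultaneous permutation of rows and columns leaves the determinant unchanged, so each $k$-element subset $S\subset Q$ is hit in exactly $k!$ equivalent ways. The prefactor $(-1)^k/k!$ then yields
\[
\det(\mathrm{Id}-\mathbb K_z)\!\!\upharpoonright_Q\;=\;\sum_{S\subset Q}(-1)^{|S|}\Bigl(\prod_{q\in S}z_q\Bigr)\det\bigl((K_z(q,q'))_{q,q'\in S}\bigr).
\]

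Next, I would expand the right-hand side via the elementary cofactor identity $\det(\mathrm{Id}-A)=\sum_{S\subset Q}(-1)^{|S|}\det(A|_S)$ applied to $A=(z_q K_z(q',q))_{q,q'\in Q}$. The row of $A|_S$ indexed by $q$ shares the scalar factor $z_q$, which I would pull out; invariance of the determinant under transpose then gives $\det(A|_S)=\bigl(\prod_{q\in S}z_q\bigr)\det\bigl((K_z(q,q'))_{q,q'\in S}\bigr)$. Comparing with the displayed formula for the restricted Fredholm side completes the identification term by term.

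The main obstacle is purely bookkeeping: $K_z$ is itself a formal power series in $z$, so the symmetrisations over ordered tuples, the vanishing on repeated indices and the row-factorisation all have to be justified coefficient by coefficient. This is routine, since for any prescribed total order in $z$ only finitely many coefficients of $K_z$ enter and the scalar-matrix identities apply verbatim; alternatively, one can first freeze generic coefficients $k_m(q',q;\cdot)$ of $K_z$ and carry out both expansions at the level of polynomial identities in $(z_q)_{q\in Q}$.
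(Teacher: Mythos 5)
Your proposal is correct and follows essentially the same route as the paper: reduce via the restriction-to-$Q$ discussion to an identity of power series in the finite family $(z_q)_{q\in Q}$, and then match the restricted Fredholm expansion with the finite determinant through the principal-minor expansion $\det(\mathrm{Id}-A)=\sum_{S\subset Q}(-1)^{\#S}\det(A|_S)$. You merely make explicit the steps (vanishing on repeated tuples, the $k!$ overcounting, row factorisation and transposition) that the paper compresses into the single display~\eqref{eq:smalldet1}.
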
 

\noindent

\begin{proof}
 We start with the more intuitive case of finite color set $\mathbb X = \{1,\ldots,\ell\}$ with $\ell\in \N$. In this case  the Fredholm determinant is just the determinant of an $\ell\times \ell$ matrix, 
\[
	\det(\mathrm{Id} - \mathbb K_z) = \det \Bigl( \bigl( \delta_{x,y} - z(x) K_z(y,x)\bigr)_{x,y=1,\ldots,\ell} \Bigr).
\] 
The analogue of~\eqref{eq:fredformal1aa} reads 
\[
	\det(\mathrm{Id} - \mathbb K_z) = 1+ \sum_{r=1}^\infty \frac{(-1)^r}{r!}\sum_{(x_1,\ldots,x_r)\in \mathbb X^r} \det\left( (K_z(x_i,x_j))_{i,j=1,\ldots, r}\right) z_{x_1}\cdots z_{x_r}.  
\] 
Suppose we want to know the coefficient of some monomial $z_{1}^{n_1}\cdots z_\ell^{n_\ell}$ in the determinant. Then, clearly the only relevant contributions are from summands $(x_1,\ldots, x_r)$ with every entry $x_i$ contained in the support $\mathrm{supp}\, \vect n = \{x:\, n_x \geq 1\}$, which plays the role of $Q$. The series is actually a sum, as all terms for $r > \#Q$ are zero. Noticing that
\begin{multline} \label{eq:smalldet1}
	 1+ \sum_{r=1}^\infty \frac{(-1)^r}{r!}\sum_{x_1,\ldots,x_r \in \mathrm{supp}\, \vect n} \det\left( (K_z(x_i,x_j))_{i,j=1,\ldots, r}\right) z_{x_1}\cdots z_{x_r}\\
  =	 \det \Bigl( \bigl( \delta_{x,y} - z(x) K_z(y,x)\bigr)_{x,y\in \mathrm{supp}\, \vect n}\bigr) \Bigr),
\end{multline}
we conclude
\be \label{eq:smalldet0}
	[\vect z^{\vect n}] \det(\mathrm{Id} - \mathbb K_z)
	= [\vect z^{\vect n}] \det\left( \left( \delta_{x,y} - z(x) K_z(y,x)\right)_{x,y\in \mathrm{supp}\, \vect n}\right). 
\ee
Turning back to general sets $\mathbb X$, the result follows as the coefficient only depends on $\det(\mathrm{Id} - \mathbb K_z) \upharpoonright_{Q}$ as discussed before the lemma.

\end{proof}

\noindent \emph{Determinants  for $z$ either with  density or which are generalized functions.}
In several applications, it is not natural to restrict $z$ to points. In statistical mechanics for example, it is typical that $\mathbb{X} \subset \mathbb{R}^d$ and $z$ has a density with respect to the Lebesgue measure. For such measures $z$ the restriction to $Q$ is always zero. In quantum field theory, one considers $z$ which are only generalized functions and the restriction to points has no sense at all. However, one can re-interpret the Fredholm determinant in another way and give an expression in terms of usual determinants. Assume that $z$ either has a density with respect to a reference measure $m$ or that $z$ is a generalized function. The density we denote as well by $z$ and the duality between test- and generalized functions we formally write as an integral. The $n$-th. coefficient of \eqref{eq:fredformal1aa} depends only on 
\be \label{eq:fredformal1a}
1 + \sum_{r=1}^n \frac{(-1)^r}{r!}\int_{\mathbb X^r} \det\left( \left( K_z(x_i,x_j) \right)_{i,j=1, \ldots ,r} \right) z( x_1) \ldots z( x_r) m(\dd x_1) \ldots m(\dd x_r),
\ee
where  $z( q_1) \ldots z( q_n)$ are interpreted as the $n$-fold tensor product of generalized functions. This shows that one can get the $n$-th. coefficient by just computing the determinant of an $n \times n$-matrix
\[
\int_{\mathbb X^n} \mathrm{det}\left(\Bigl( \delta(x_i - x_j) - z(x_j) K_z(x_i, x_j)\Bigr)_{i,j=1, \ldots n}\right) m(\dd x_1) \ldots m(\dd x_r) .
\]
The determinant is well-defined, because in all expressions the generalized functions $z$ are evaluated at different points. The Dirac deltas cause no problem, because they only lead to dropping integrals. Indeed, one just obtain \eqref{eq:fredformal1a}, which is well-defined for generalized functions $z$.

\section{Combinatorial species with uncountable color space} \label{app:coloredspecies}

Formal power series in finitely or countably many variables have a natural interpretation as exponential generating functions for labelled, colored combinatorial species, which helps prove identities of power series identities independently of any convergence considerations. This point of view is formalized with Joyal's theory of combinatorial species~\cite{joyal1981, bergeron-labelle-leroux1998book}. Power series in several variables correspond to colored combinatorial species, also called multisort species~\cite{bergeron-labelle-leroux1998book}, with one variable $z_k$ per color or sort.  See the survey by Faris~\cite{faris2010combinatorics} for an account in the context of cluster expansions and~\cite{faris2009feynman} for applications to Feynman diagrams. 

This appendix extends the theory of combinatorial species generalizes the theory of colored species to infinite, possibly uncountable color space $\col$. Such a generalization was in fact already proposed by M{\'e}ndez and Nava~\cite{mendez-nava1993}, however the concept of generating function proposed by M{\'e}ndez and Nava is too restrictive for our purpose. Indeed their generating functions are sums of monomials $\prod_{k\in \col} z_k^{n_k}$ indexed by multi-indices $\vect n = (n_k)_{k\in \col}$ that have only finitely many non-zero entries $n_k \neq 0$. Our generating functions instead are functions of  measures $z(\dd x)$ on the color space $\col$. This is the kind of generating function that appears naturally in the statistical mechanics of inhomogeneous systems, where colors may  correspond, for example,  to positions $x\in \R^d$ in space and the measure $z(\dd x)$ is a position-dependent activity \cite{stell1964}. 

Another difference with~\cite{mendez-nava1993} is a more nuanced notion of family of power series. For finite color spaces and finitely many variables, it is natural to look at families $(F_k)_{k\in C}$ of generating functions, e.g., for rooted colored trees that have their root of color $k$. In our setup the relevant power series are either function or measure-valued. For example, rooted colored trees give rise to a family $(T_B^\bullet(z))_{B\subset \col}$ indexed by sets $B\subset \col$ instead of elements $k\in \col$:   to each set of colors $B\subset \col$ associate the generating function for trees whose root has color in $B$. \\

We proceed with a short self-contained description of our formalism, which is similar to~\cite{mendez-nava1993} but has a different notion of generating function. We start with the concrete example of graphs. Let $\col$ be a non-empty possibly infinite set, the set of \emph{colors}. Let $V$ be a finite set, e.g., $V = \{1,\ldots,n\}= [n]$ with $n\in \N$; $V$ is the set of  \emph{labels}.  A $\col$-colored, labelled graph on $V$ is a pair $(G,c)$ consisting of (i) a graph $G$ with vertex set $V$, (ii) a map $c:V\to C$. Thus every vertex of the graph is assigned both a label  $v\in V$  and a color $c(v)\in C$. The pair $(V,c)$ is a \emph{$\col$-colored set}.  We write $\mathcal{G}(V,c)$ for the collection of colored labelled graphs on $V$ with prescribed coloring $c$. 

Admissible relabellings of vertices are formalized with bijections: Let $(W,\tilde c)$ be another finite colored set and $\varphi:V\to W$ a color-preserving bijection, i.e., $\tilde c(\varphi(v)) = c(v)$ for all $v\in V$. 
Relabelling the vertices $v$ of a graph $G\in \mathcal{G}(V,c)$ by $w=\varphi(v)$ we obtain a graph $\tilde G\in \mathcal{G}(W,\tilde c)$. In this way $\varphi$ induces a bijection between $\mathcal{G}(V,c)$ and $\mathcal{G}(W,\tilde c)$.  Choosing $V=W=[n]$, we deduce that for every permutation $\sigma\in \mathfrak{S}_n$, we have 
$\#\mathcal{G}([n],c) =\#\mathcal{G}([n],c\circ \sigma)$. Put differently,  $\#\mathcal{G}([n],c)$ is a symmetric function of the variables $c_i=c(i)$. 
The associated formal power series 
\be
	\sum_{n=1}^\infty \frac{1}{n!}\int_{C^n} \#\mathcal{G}([n],c) z(\dd c_1)\cdots z(\dd  c_n)
\ee
is the \emph{(exponential) generating function} of the species of labelled, $\col$-colored graphs. 

In the following we consider the set of colors $C$ to be fixed once and for all.

\begin{definition}
	A \emph{colored set} is a pair $(V,c)$ consisting of a finite set $V$ and a map $c:V\to C$. A \emph{color-preserving bijection} $\varphi:(V,c)\to(W,\tilde c)$ is a bijection from $V$ onto $W$ such that $\tilde c = c\circ \varphi$.
\end{definition}

\noindent The empty set $V=\varnothing$ is considered a colored set. This is needed for the combinatorial counterpart of the variational derivative (see below) and  allows for a conceptualization of  pinned vertices that are not integrated over in generating functions. For example, we may be interested in the set of trees with vertex set $[n]\cup\{\star,\circ\}$ where $\star,\circ\notin [n]$ are two distinct  elements not in $[n]$. Then $n=0$ corresponds to trees with vertex set $\{\star,\circ\}$.  

Sometimes we write colorings as vectors $(c_v)_{v\in V}$ and not as maps $v\mapsto c(v)$.

\begin{definition}
	A labelled combinatorial species $F$ with colors in $C$ consists of two rules:
	\begin{enumerate}
		\item a rule $(V,c)\mapsto F(V,c)$ that assigns to every colored set a finite set (for $V=\varnothing$ we write instead $F(\varnothing)$),
		\item a rule that assigns to every color-preserving bijection $\varphi: (V,c)\to (W,\tilde c)$ between colored sets a bijection $\Phi:F(V,c)\to F(W,\tilde c)$ (\emph{relabelling}),
	\end{enumerate}
	that satisfy the following: for all color-preserving bijections $\varphi_1:(V_1,c_1)\to (V_2,c_2)$, $\varphi_2:(V_2,c_2)\to (V_3,c_3)$, the relabelling induced by $\varphi_3=\varphi_2\circ\varphi_1$ is the composition of the relabellings, $\Phi_3 =\Phi_2\circ \Phi_1$. 
\end{definition} 

\noindent Put differently, a colored combinatorial species is a functor from the category of of colored sets (objects: finite colored sets, morphisms: color-preserving bijections) to the category of finite sets. In concrete examples the concept of relabelling and its functorial property are often so natural that the rule $\varphi\mapsto \Phi$ is left implicit.

\begin{definition}
	Let $F$ be a labelled colored combinatorial species. 
	Suppose that 	for each colored map $(V,c)$, we are given a weight function $w_{(V,c)}:F(V,c)\to \C$ and that for all color-preserving bijections $\varphi:(V,c)\to (W,\tilde c)$, $w_{(W,\tilde c)} \circ \Phi = w_{(V,\tilde c)}$. The pair $(F,w)$ of consisting of the species $F$ and the family $w$ of weights $w_{(V,c)}$ is a \emph{weighted colored species}. 
\end{definition} 

\noindent By a slight abuse of notation we shall omit the indices and use the letter $w$ both for the family of weight functions and for weight maps $w:F(V,c)\to \C$. 

\begin{lemma} 
	Let $(F,w)$ be a weighted colored species. Then for every $n\geq 1$, every coloring $c:[n]\to C$ and all $\sigma \in \mathfrak{S}_n$, 
	$$\sum_{g\in F([n],c)} w(g) = \sum_{g\in F([n],c\circ \sigma)} w(g).$$
\end{lemma}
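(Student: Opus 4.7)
The plan is to deduce the identity as an immediate consequence of the two defining properties of a weighted colored species: functoriality of the relabelling rule and compatibility of the weights with relabelling.

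First, I would observe that any permutation $\sigma \in \mathfrak{S}_n$ can itself be viewed as a color-preserving bijection of colored sets. Indeed, consider $\sigma:[n]\to[n]$ as a map from $([n], c\circ \sigma)$ to $([n], c)$. The color-preservation condition is $c(\sigma(i)) = (c\circ \sigma)(i)$ for all $i\in[n]$, which holds tautologically. Thus $\sigma$ is a morphism in the category of colored sets.

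Next, by the defining rule of the species $F$, this morphism induces a bijection $\Sigma: F([n], c\circ \sigma) \to F([n], c)$. By the assumed compatibility of the weight family with relabellings (applied to $\varphi = \sigma$), we have $w_{([n],c)}\circ \Sigma = w_{([n], c\circ \sigma)}$, i.e.\ the weight of each element is preserved when transported along $\Sigma$.

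Finally, summing over $F([n], c\circ \sigma)$ and reindexing via the bijection $\Sigma$ yields
\begin{equation*}
	\sum_{g \in F([n], c\circ \sigma)} w(g) = \sum_{g \in F([n], c\circ \sigma)} w(\Sigma(g)) = \sum_{g' \in F([n], c)} w(g'),
\end{equation*}
which is the desired equality. There is no real obstacle here; the only subtlety worth flagging is the viewpoint shift that recognizes a permutation of labels as a legitimate color-preserving bijection between two a priori distinct colored sets having the same underlying set $[n]$ but different coloring maps. Once this is observed, the lemma reduces to a one-line application of functoriality.
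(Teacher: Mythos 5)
Your proof is correct and is exactly the intended argument: the paper itself omits the proof (``The elementary proof is left to the reader''), and the step of recognizing $\sigma$ as a color-preserving bijection $([n],c\circ\sigma)\to([n],c)$, transporting structures along the induced bijection $\Sigma$, and reindexing the weighted sum is precisely what is expected. (Minor remark: you only need the existence of the induced bijection and the weight compatibility, not the full functoriality/composition axiom.)
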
 

\noindent Thus $\sum_{g\in G([n],c)} w(g)$ is a symmetric function of the color variables $c(1),\ldots,c(n)$ and we may apply the notion of formal power series. The elementary proof is left to the reader. 

\begin{definition} 
	The generating function of a weighted colored species $(F,w)$ is the formal power series 
	$$
		F(z) = \sum_{g\in F(\varnothing)} w(g) +\sum_{n=1}^\infty \frac{1}{n!}\int_{C^n} \Bigl( \sum_{g\in F([n],c)} w(g)\Bigr) z(\dd c_1)\cdots z(\dd c_n). 
	$$
\end{definition} 	

%

\begin{example}[Colored singletons]
Let $B\subset \col$ be a set of colors. The species of singletons with color in $B$ is the species 
\be
	F_B(V,c) = \begin{cases}
		\{(V,c)\}, &\quad V = \{v\}\text{ is  a singleton and } c(v) \in B,\\
		0, &\quad \text{else}. 
	\end{cases}
\ee
The associated generating function is 
\be
	F_B(z) = \int_\col \1_B(c_1) z(\dd c_1)  = z(B),
\ee
compare Example~\ref{ex:z}.  
\end{example}

Operations on formal power series correspond to operations on combinatorial species. We provide formulas for non-weighted species only, the generalization to weighted species is straightforward.  \\

\noindent \emph{Cartesian product}. Let $F,G$ be two combinatorial species. We define a new species $F\times G$ by 
$$
	(F\times G)(V,c):= \bigcup_{\substack{V_1,V_2\subset V:\\ V_1\cap V_2 = \varnothing, V_1\cup V_2= V} } F(V_1,c\bigr|_{V_1}) \times G(V_2,c\bigr|_{V_2}).
$$
The generating function is $(F\times G)(z) = F(z) G(z)$, compare Eq.~\eqref{eq:formal-product}. Hence an $F\times G$-structure on $V$ is a pair $(f,g)$ consisting of an $F$-structure on $V_1$ and a $G$-structure on $V_2$, with $V_1,V_2$ a partition of $V$ into possibly empty sets. \\

\noindent
\emph{Derivatives.}
Let $(V,c)\mapsto G(V,c)$ be some colored combinatorial species. Suppose that for each finite set $V$ there is a designated element $\circ = \circ_V$ that is not in $V$, see Definition~5, Remark~6, and Exercise~16 in~\cite[Chapter~1.4]{bergeron-labelle-leroux1998book}.  Given $q\in \col$, we extend a  coloring $c:V\to \col$ to a coloring $c_q^\circ: V\cup \{\circ\} \to \col$ by setting 
\be
	c_q^\circ (\circ ) = q,\quad c_q^\circ (v) = c(v) \quad (v\in V). 
\ee
Then we define a family $(F_q)_{q\in \col}$ of colored species by 
\be
	G_q^\circ(V,c) := G\bigl( V\cup \{\circ\}, c_q^\circ\bigr).
\ee
The generating function is 
\be
	G_q^\circ(z) = \sum_{n=1}^\infty \frac{1}{n!} \int_{\col ^n}\# G\bigl( [n]\cup \{\circ\}, (c_j)_{j\in V\cup \{\circ\}}\bigr)  z(\dd c_1)\cdots z(\dd c_n),\quad c_\circ =q
\ee
in which we recognize the variational derivative 
\be
	 G_q(z) = \frac{\delta G}{\delta z}(q;z) = \frac{\delta}{\delta z(q)} G(z), 
\ee
see Eq.~\eqref{fvardev}. 
We may think of objects in $G_q^\circ$ as objects of $G$ rooted in $\circ$. For example, when $G= T$ is the species of non-rooted trees, then $T_q^\circ$ can be identified with the species of trees rooted in the non-labelled vertex (ghost)  $\circ$ 
with prescribed root color $q$. \\
%

\noindent \emph{Pointing}. Let $G$ be a species and $(V,c)$ a finite colored set. For $B\subset C$, define 
\be
	 G_B^\bullet(V,c):= \{ (g,r)\mid g\in G(V,c),\, r\in V,\, c(r) \in B\}.
\ee
For example, when $G=T$ is the species of the non-rooted trees, $T_B^\bullet$  corresponds to rooted trees with labelled root and root color in $B$. We have 
\be
	 \#G_B^\bullet([n],c) =\sum_{r=1}^n \1_B(c(r))\, \# G([n],c) .
 \ee
Comparison with Eq.~\eqref{eq:formal-radon} yields 
\be
	G_B^\bullet(z) = \int \1_B(q) G_q^\circ(z) z(\dd q). 
\ee
Compare Lemma~\ref{lem:treeformula} and the paragraph preceding the lemma. \\



\subsubsection*{Acknowledgments} 
	S.J. thanks the GSSI,  T.K. the University in L'Aquila, Italy,  and D.T. thanks the LMU in Munich, Germany, for hospitality. S.J. was supported by the Munich Center for Quantum Science and Technology (MCQST). We thank Abdelmalek Abdesselam, Giovanni Gallavotti, and Alessandro Giuliani for fruitful discussions and Nils Berglund for pointing out the reference~\cite{hairer2018bphz}.

\bibliographystyle{amsalpha}
\bibliography{virial-combinatorics}

\end{document}